\newtheorem{theorem}{Theorem}
\newtheorem{lemma}{Lemma}
\newtheorem{problem}{Problem}
\newtheorem{definition}{Definition}
\newtheorem{remark}{Remark}
\tikzstyle{block} = [draw,rectangle, rounded corners, minimum width=1cm, minimum height=0.8cm,text centered, line width=2pt ]
\tikzstyle{arrow} = [thick,->,>=stealth,line width=2pt]
\tikzset{
  shift left/.style ={commutative diagrams/shift left={#1}},
  shift right/.style={commutative diagrams/shift right={#1}}
}
\tikzstyle{block} = [draw,rectangle, rounded corners, minimum width=1cm, minimum height=0.8cm,text centered, line width=2pt ]
\tikzstyle{arrow} = [thick,->,>=stealth,line width=2pt]
\newcommand{\tp}{\intercal}		
\newcommand{\R}{\mathbb{R}}			
\DeclareMathOperator{\ee}{\mathbb{E}}			
\DeclareMathOperator{\prob}{\mathbb{P}}			
\DeclareMathOperator{\vecc}{\mathbf{vec}}		
\DeclareMathOperator{\tr}{\mathbf{tr}}			
\DeclareMathOperator{\cov}{\mathbf{cov}}		
\title{\LARGE \bf
Optimal Local and Remote Controllers with Unreliable Communication
}
\author{Yi Ouyang, Seyed Mohammad Asghari, and Ashutosh Nayyar
\thanks{
Y. Ouyang, S. M. Asghari, and A. Nayyar are with the Department of Electrical
Engineering, University of Southern California, Los Angeles, CA 90089
(e-mail: yio@usc.edu; asgharip@usc.edu; ashutosn@usc.edu).}
 \thanks{This research was supported by NSF under grants ECCS 1509812 and CNS 1446901.}
}
\begin{document}

\maketitle
\thispagestyle{empty}
\pagestyle{empty}

\begin{abstract}

We consider a decentralized optimal control problem for a linear plant controlled by two controllers, a local controller and a remote controller. The local controller directly observes the state of the plant and can inform the remote controller of the plant state through a packet-drop channel. We assume that the remote controller is able to send acknowledgments to the local controller to signal the successful receipt of transmitted packets. The objective of the two controllers is to cooperatively minimize a quadratic performance cost. We provide a dynamic program for this decentralized control problem using the common information approach. Although our problem is not a partially nested LQG problem, we obtain explicit  optimal strategies for the two controllers. In the optimal strategies, both controllers compute a common estimate of the plant state based on the common information.
The remote controller's action is linear in the common estimated state, and the local controller's action is linear in both the actual state and the common estimated state.

\end{abstract}


\section{Introduction}

Networked control systems (NCS) are distributed systems that consist of several components (e.g. physical systems, controllers, smart sensors, etc.) and the communication network that connects them together. With the recent interest in cyber-physical systems and  the Internet of Things (IoT), NCS have received considerable attention in the recent years (see \cite{HespanhaSurvey}  and references therein). 
In contrast to traditional control systems, the interconnected components in  NCS are linked through unreliable channels with random packet drops and delays.
In the presence of unreliable communication in NCS, the implicit assumption of perfect data exchange in classical estimation and control system fails \cite{Schenato2007}. 
Therefore, efficient operation of NCS requires decentralized decision-making while taking into account the  unreliable communication among decision-makers.

In this paper, we consider an optimal control problem for a NCS consisting of a linear plant and two controllers, namely the local controller and the remote controller, connected through an unreliable communication link as shown in Fig. \ref{fig:SystemModel}.
The local controller directly observes the state of the plant and can inform the remote controller of the plant state through a channel with random packet drops. 
We consider a TCP structure so that the remote controller is able to send acknowledgments to the local controller to signal the successful receipt of transmitted packets.
The objective of the two controllers is to cooperatively minimize the overall quadratic performance cost of the NCS.
The problem is motivated from  applications that demand remote control of systems over wireless networks where links are prone to failure.
The local controller can be a small local processor proximal to the system that measures the status of the system and can perform limited control.
The remote controller can be a more powerful controller that receives information from the local processor through a wireless channel.

Similar setups of NCS has been investigated in the literature with only the remote controller present.
Various communication protocols including the TCP (where acknowledgments are available) and the UDP (where acknowledgments are not available) and variations have been investigated \cite{Imer2006optimal,Sinopoli2005,Sinopoli2006,Elia2004, Garone2008}.
For NCS with two decision-makers, \cite{LipsaMartins:2011,NayyarBasarTeneketzisVeeravalli:2013} have studied the problem when the local controller is a smart sensor and the remote controller is an estimator.
When the linear plant is controlled only by the remote controller and the local controller is a smart sensor or encoder,
\cite{BansalBasar:1989a,TatikondaSahaiMitter:2004,NairFagnaniZampieriEvan:2007,molin2013optimality,rabi2014separated} have shown that the separation of control and estimation holds for the remote controller under various communication channel models.

The problem considered in this paper is  different from previous works on NCS because 
our problem is a two-controller decentralized problem where both controllers can control the dynamics of the plant.
Finding optimal strategies for two-controller decentralized problems is generally difficult (see \cite{Witsenhausen:1968, LipsaMartins:2011b,blondel2000survey}).  In general, linear control strategies are not optimal, and even the problem of finding the best linear control strategies is not convex \cite{YukselBasar:2013}. 
Existing optimal solutions of two-controller decentralized problems require either specific information structures, such as static \cite{Radner:1962}, partially nested \cite{HoChu:1972,LamperskiDoyle:2011,LessardNayyar:2013,ShahParrilo:2013, Nayyar_Lessard_2015,Lessard_Lall_2015}, stochastically nested \cite{Yuksel:2009}, or other specific properties, such as quadratic invariance \cite{RotkowitzLall:2006} or substitutability \cite{AsghariNayyar:2015}. 
None of the above properties hold in our problem due to either the unreliable communication or the nature of dynamics and cost function.
In spite of this,  we solve the two-controller decentralized problem and provide explicit optimal strategies for the local controller and the remote controller. In the optimal strategies, both controllers compute a common estimate of the plant state based on the common information.
The remote controller's action is linear in the common estimated state, and the local controller's action is linear in both the actual state and the common estimated state.


\begin{figure}[h]
\begin{center}
%
\includegraphics[scale=1]{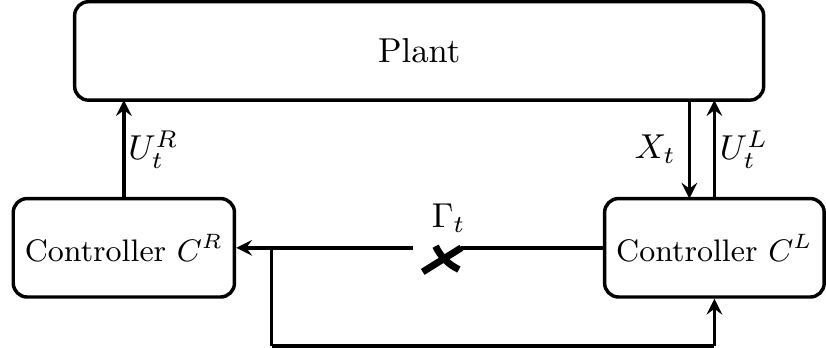}
\caption{Two-controller system model. The binary random variable $\Gamma_t$ indicates whether packets are transmitted successfully.}
\label{fig:SystemModel}
\end{center}
\end{figure}

\subsection{Organization}
The rest of the paper is organized as follows. We introduce the system model and formulate the two-controller optimal control problem in Section \ref{sec:model}. In Section \ref{sec:structure}, we provide a dynamic program for the decentralized control problem using the common information approach. 
We solve the dynamic program in Section \ref{sec:solution}.
Section \ref{sec:conclusion} concludes the paper.

\subsection*{Notation}
Random variables/vectors are denoted by upper case letters, their realization by the corresponding lower case letter.
For a sequence of column vectors $X, Y, Z,...$, the notation $\vecc(X,Y,Z,...)$ denotes vector $[X^{\tp}, Y^{\tp}, Z^{\tp},...]^{\tp}$. The transpose and trace of matrix $A$ are denoted by $A^{\tp}$ and $\tr(A)$, respectively. 
In general, subscripts are used as time index while superscripts are used to index controllers.
For time indices $t_1\leq t_2$, $X_{t_1:t_2}$ (resp. $g_{t_1:t_2}(\cdot)$) is the short hand notation for the variables $(X_{t_1},X_{t_1+1},...,X_{t_2})$ (resp.  functions $(g_{t_1}(\cdot),\dots,g_{t_1}(\cdot))$).
The indicator function of set $E$ is denoted by $\mathds{1}_{E}(\cdot)$, that is, $\mathds{1}_{E}(x) = 1$ if $x \in E$, and $0$ otherwise.
$\prob(\cdot)$, $\ee[\cdot]$, and $\cov(\cdot)$ denote the probability of an event, the expectation of a random variable/vector, and the covariance matrix of a random vector, respectively.
For random variables/vectors $X$ and $Y$, $\prob(\cdot|Y=y)$ denotes the probability of an event given that $Y=y$, and $\ee[X|y] := \ee[X|Y=y]$. 
For a strategy $g$, we use $\prob^g(\cdot)$ (resp. $\ee^g[\cdot]$) to indicate that the probability (resp. expectation) depends on the choice of $g$. 
Let $\Delta(\R^n)$ denote the set of all probability measures on $\R^n$. 
For any $\theta \in \Delta(\R^n)$, $\theta(E) = \int_{\R^n} \mathds{1}_{E}(x) \theta(dx)$ denotes the probability of event $E$ under $\theta$. 
The mean and the covariance of a distribution $\theta \in \Delta(\R^n)$ are denoted by $\mu(\theta)$ and $\cov(\theta)$, respectively, and are defined as $\mu(\theta) = \int_{R^n} x \theta(dx)$ and 
$\cov(\theta) = \int_{R^n} (x - \mu(\theta)) (x - \mu(\theta))^{\tp} \theta(dx)$.

\section{System Model and Problem Formulation}
\label{sec:model}
%
%
%
%

Consider the discrete-time system with two controllers as shown in Fig. \ref{fig:SystemModel}.
The linear plant dynamics are given by
\begin{align}
&X_{t+1} \!=\! A X_t + B^LU^L_t+ B^RU^R_t + W_t, t=0,\dots,T
 \label{Model:system}
\end{align}
where $X_t\in \R^{n_X}$ is the state of the plant at time $t$,
$U^L_t \in \R^{n_L}$ is the control action of the local controller $C^L$, $U^R_t\in \R^{n_R}$ is the control action of the remote controller $C^R$, and $A, B^L,B^R$ are matrices with appropriate dimensions.
$X_0$ is a  random vector with distribution $\pi_{X_0}$, $W_t \in \R^{n_X}$ is a zero mean  noise vector at time $t$ with distribution $\pi_{W_t}$. $X_0,W_0,W_1,\dots,W_T$ are independent random vectors with finite second moments.

 At each time $t$ the local controller $C^L$ perfectly observes the state $X_t$ and sends the observed state to the remote controller $C^R$ through an unreliable channel with packet drop probability $p$. 
 Let $\Gamma_t$ be Bernoulli random variable describing the nature of this channel, that is, $\Gamma_t=0$ when the link is broken and otherwise, $\Gamma_t=1$. We assume that $\Gamma_t$ is independent of all other variables before time $t$. Furthermore, let $Z_t$ be the channel output, then, 
 \begin{align}
\Gamma_t = &\left\{\begin{array}{ll}
1 & \text{ with probability }(1-p),\\
0 & \text{ with probability }p.
\end{array}\right.
\\
Z_t = &
\left\{\begin{array}{ll}
X_t & \text{ when } \Gamma_t = 1,\\
\emptyset & \text{ when } \Gamma_t = 0.
\end{array}\right.
\label{Model:channel}
\end{align}
We assume that the channel output $Z_t$ is perfectly observed by $C^R$. The remote controller sends an acknowledgment when it receives the state. Thus, effectively,  $Z_t$ is  perfectly observed by $C^L$ as well.  The two controllers select their control actions after observing $Z_t$. We assume that the links from the controllers to the plant are perfect.


Let $H^L_t$ and $H^R_t$ denote the information available to $C^L$ and $C^R$ to make decisions at time $t$, respectively.\footnote{$U^R_{t-1}$ is not directly observed by $C^L$ at time $t$, but $C^L$ can obtain $U^R_{t-1}$ because $U^R_{t-1}=g^R_t(H^R_{t-1})$ and $H^R_{t-1} \subset H^L_{t}$. } Then,
\begin{align}
H^L_t= \{X_{0:t}, Z_{0:t}, U^L_{0:t-1}, U^R_{0:t-1}\}, \hspace{2mm} H^R_t = \{Z_{0:t}, U^R_{0:t-1}\}. 
\label{Model:info}
\end{align}
Let $\mathcal{H}^L_t$ and $\mathcal{H}^R_t$ be the spaces of all possible information of $C^L$ and $C^R$ at time $t$, respectively.
Then, $C^L$ and $C^R$'s actions are selected according to
\begin{align}
U^L_t = g^L_t(H^L_t), \hspace{2mm} U^R_t = g^R_t(H^R_t), 
\label{Model:strategy}
\end{align}
where the control strategies $g^L_t:\mathcal{H}^L_t\mapsto \R^{n_L}$ and $g^R_t:\mathcal{H}^R_t\mapsto \R^{n_R}$ are measurable mappings. 


The instantaneous cost $c_t(X_t,U^L_t,U^R_t)$ of the system is a general quadratic function given by
\begin{align*}
&c_t(X_t,U^L_t,U^R_t) = 
S_t^\tp R_tS_t, ~\mbox{where}
\\
&S_t = \vecc(X_t,U^L_t,U^R_t), R_t = \left[\begin{array}{lll}
R^{XX}_t & R^{XL}_t & R^{XR}_t \\
R^{LX}_t & R^{LL}_t & R^{LR}_t \\
R^{RX}_t & R^{RL}_t & R^{RR}_t
\end{array}\right],
\end{align*}
and $R_t$ is a symmetric positive definite (PD) matrix.

The performance of strategies $g^L:=g^L_{0:T}$ and $g^R:=g^R_{0:T}$
is the total expected cost given by
\begin{align}
J(g^L,g^R)=\ee^{g^L,g^R}\left[\sum_{t=0}^T c_t(X_t,U^L_t,U^R_t)\right].
\label{Model:obj}
\end{align}

Let $\mathcal{G}^L$ and $\mathcal{G}^R$ denote all possible control strategies of $C^L$ and $C^R$ respectively. The optimal control problem for $C^L$ and $C^L$ is formally defined below.
\begin{problem}
\label{problem1}
For the system described by \eqref{Model:system}-\eqref{Model:obj}, determine control strategies $g^L$ and $g^R$ that minimize the performance cost of \eqref{Model:obj}.
\end{problem}

Problem \ref{problem1} is a two-controller decentralized optimal control problem.
Note that Problem \ref{problem1} is not a  partially nested LQG problem.
In particular, the local controller $C^L$'s action $U^L_{t-1}$ at $t-1$ affects $X_t$, 
and consequently, it affects $Z_t$.
Since $Z_t$ is a part of the remote controller $C^R$'s information $H^R_t$ at $t$ but $H^L_{t-1} \not \subset H^R_t$, the information structure in Problem \ref{problem1} is not partially nested.
Therefore, linear control strategies are not necessarily optimal for Problem \ref{problem1}.

Our approach to Problem \ref{problem1} is based on the common information approach \cite{nayyar2013decentralized} for decentralized decision-making.
We identify the common belief of the system state for $C^L$ and $C^R$.
The common belief can serve as an information state that leads to a dynamic program for optimal strategies of the two-controller problem. 

\begin{remark}
The results of \cite{nayyar2013decentralized} are developed only for finite  spaces. Therefore, we can not  directly apply the results of \cite{nayyar2013decentralized} to Problem \ref{problem1}.

\end{remark}

\section{Common Belief and Dynamic Program}
\label{sec:structure}



From \eqref{Model:info}, $H^R_t$ is the \emph{common information} among the two controllers. 
Consider fixed strategies $g^L_{0:t-1},g^R_{0:t-1}$ until time $t-1$. 
Given any realization $h^R_t \in\mathcal{H}^R_t$ of the common information, we define the common belief $\theta_t \in \Delta(\R^{n_X})$ as the conditional probability distribution of $X_t$ given $h^R_t$. That is, for any measurable set $E \subset \R^{n_X}$
\begin{align}
&\theta_t(X_t \in E) = 
\prob^{g^L_{0:t-1},g^R_{0:t-1}}(X_t \in E |h^R_{t}).
\label{eq:thetat}
\end{align}

Using ideas from the common information approach \cite{nayyar2013decentralized}, the common belief $\theta_t$ could serve as an information state for decentralized decision-making. We proceed to show that $\theta_t$ is indeed an information state that can be used to write a dynamic program for Problem \ref{problem1}.

The following Lemma provides a structural result for $C^L$.
\begin{lemma}
\label{lm:first_structure}
Let $\hat H^L_t = \vecc(X_{t}, H^R_t)$, and $\hat{\mathcal{H}}^L_t$ be  the space of all possible $\hat H^L_t$. Let $\hat{\mathcal{G}}^L=\{g^L: g^L_t \text{ is measurable from } \hat{\mathcal{H}}^L_t \text{ to }\R^{n_L}\}$.
Then,
\begin{align}
\inf_{g^L\in\mathcal{G}^L, g^R\in\mathcal{G}^R} J(g^L,g^R) =\inf_{g^L\in\hat{\mathcal{G}}^L, g^R\in\mathcal{G}^R} J(g^L,g^R).
\end{align}

\end{lemma}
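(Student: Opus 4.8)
The plan is to fix the remote controller's strategy $g^R$ and recognize that the local controller then faces a \emph{centralized} (single-agent) Markov decision problem for which $\hat H^L_t$ is a sufficient state. The crux of the argument is the identity, for each fixed $g^R\in\mathcal{G}^R$,
\begin{align}
\inf_{g^L\in\mathcal{G}^L} J(g^L,g^R) = \inf_{g^L\in\hat{\mathcal{G}}^L} J(g^L,g^R), \nonumber
\end{align}
after which taking $\inf_{g^R\in\mathcal{G}^R}$ of both sides yields the claimed equality of the joint infima directly. (As a sanity check, one inequality is trivial: since $H^R_t\subset H^L_t$ and $X_t\in H^L_t$, the augmented information $\hat H^L_t=\vecc(X_t,H^R_t)$ is a function of $H^L_t$, so $\hat{\mathcal{G}}^L\subseteq\mathcal{G}^L$ and the restricted infimum can only be larger.)

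To establish the per-$g^R$ identity, I would first set up the MDP with state $\hat H^L_t=\vecc(X_t,H^R_t)$ and control $U^L_t$. The key step is verifying the controlled Markov property: given $\hat H^L_t$ and $U^L_t$, the next state $\hat H^L_{t+1}=\vecc(X_{t+1},H^R_{t+1})$ must be a fixed function of $(X_t,H^R_t,U^L_t)$ and exogenous noise that is independent of the past. With $g^R$ fixed we have $U^R_t=g^R_t(H^R_t)$, so from \eqref{Model:system}, $X_{t+1}=AX_t+B^LU^L_t+B^Rg^R_t(H^R_t)+W_t$ depends only on $(X_t,H^R_t,U^L_t,W_t)$; moreover $H^R_{t+1}=H^R_t\cup\{Z_{t+1},U^R_t\}$, where $U^R_t=g^R_t(H^R_t)$ and, by \eqref{Model:channel}, $Z_{t+1}$ is determined by $X_{t+1}$ and $\Gamma_{t+1}$, the latter being independent of all variables up to time $t$. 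Since $W_t$ and $\Gamma_{t+1}$ are independent of the past, $\hat H^L_t$ is indeed a controlled Markov state, and the per-stage cost $c_t(X_t,U^L_t,g^R_t(H^R_t))$ is a measurable function of $(\hat H^L_t,U^L_t)$.

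Next I would identify the policy classes with the usual MDP notions. The full information $H^L_t=\{X_{0:t},Z_{0:t},U^L_{0:t-1},U^R_{0:t-1}\}$ coincides exactly with the MDP history $(\hat H^L_0,U^L_0,\hat H^L_1,\dots,\hat H^L_t)$ (note $U^R_{0:t-1}$ and $Z_{0:t}$ are recovered from the $H^R$-components). Hence $\mathcal{G}^L$ is precisely the class of history-dependent policies for this MDP, while $\hat{\mathcal{G}}^L$ is the class of Markov (state-feedback) policies. The desired per-$g^R$ identity is then the standard finite-horizon result that Markov policies attain the same optimal value as history-dependent policies, which I would establish by the usual backward-induction/dynamic-programming argument on the value functions over $t=T,T-1,\dots,0$.

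The main obstacle is not the algebra but the measurability bookkeeping: the MDP state space $\hat{\mathcal{H}}^L_t$ is continuous and grows with $t$, so the elementary finite-space reduction of \cite{nayyar2013decentralized} does not apply (cf.\ the Remark). I would therefore carry out the backward induction using the theory of MDPs on Borel spaces, invoking a measurable-selection theorem to guarantee that the minimizing $U^L_t$ can be chosen as a measurable function of $\hat H^L_t$, so that the induced $\hat g^L$ genuinely lies in $\hat{\mathcal{G}}^L$. Verifying that the per-stage quadratic cost and the transition kernel satisfy the regularity hypotheses needed for measurable selection is the one place requiring care.
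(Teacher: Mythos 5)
Your proposal is correct and follows essentially the same route as the paper, whose entire proof is the observation that for each fixed $g^R$ the local controller faces an MDP with state $\hat H^L_t = \vecc(X_t, H^R_t)$, so that by standard MDP theory Markov policies lose no optimality. Your write-up simply makes explicit what the paper leaves implicit---the verification of the controlled Markov property, the identification of $\mathcal{G}^L$ and $\hat{\mathcal{G}}^L$ with history-dependent and Markov policy classes, and the Borel-space measurable-selection caveat (consistent with the paper's own Remark that the finite-space results of the common-information literature do not directly apply).
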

From Lemma \ref{lm:first_structure}, we only need to consider strategies $g^L \in \hat{\mathcal{G}}^L$ for the local controller $C^L$. That is, $C^L$ only needs to use $\hat H^L_t = \vecc(X_{t}, H^R_t)$ to make the decision at $t$.

For any strategy $g^L \in \hat{\mathcal{G}}^L$ we provide a representation of $g^L$ using the space $\mathcal{Q}^{\theta}$ defined below.
\begin{definition}

For any $\theta \in \Delta(\R^{n_X})$, define a set of mappings
\begin{align}
&\mathcal{Q}^{\theta}
\!=\!
\left\{\!q:\R^{n_X}\!\!\mapsto\!\R^{n_L}\text{measurable,} \int_{\R^{n_X}}\hspace{-15pt}q(x) \theta (dx)\! =\! 0
\right\}.
\end{align}
\end{definition}

\begin{lemma}
\label{lm:strategyspace}
For any strategies $g^L \in \hat{\mathcal{G}}^L$ and $g^R \in \mathcal{G}^R$, let $\theta_t$ be the conditional probability distribution defined in \eqref{eq:thetat}.
Then at any time $t$ there exists $\bar g^L_{t}: \mathcal{H}^R_t \mapsto\R^{n_L}$ and 
$\tilde g^L_{t}: \mathcal{H}^R_t\mapsto\mathcal{Q}^{\theta_t}$ such that
$\bar g^L_{t}$ is measurable and
\begin{align}
g^L_t(x_t,h^R_t) = \bar g^L_t(h^R_t) + q_t(x_t), \hspace{4mm} q_t = \tilde g^L_t(h^R_t).
\end{align}
\end{lemma}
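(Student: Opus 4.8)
The plan is to obtain the decomposition by splitting $g^L_t$ into its conditional mean given the common information and a zero-mean residual. First I would define, for each realization $h^R_t$ of the common information,
\[
\bar g^L_t(h^R_t) := \int_{\R^{n_X}} g^L_t(x, h^R_t)\,\theta_t(dx),
\]
which, because $\theta_t$ is by construction the (regular) conditional distribution of $X_t$ given $h^R_t$, is exactly the conditional mean $\ee^{g^L,g^R}[U^L_t \mid h^R_t]$ of the local action. I would then set
\[
q_t(\cdot) := g^L_t(\cdot, h^R_t) - \bar g^L_t(h^R_t), \qquad \tilde g^L_t(h^R_t) := q_t,
\]
so that the required identity $g^L_t(x_t, h^R_t) = \bar g^L_t(h^R_t) + q_t(x_t)$ holds by construction.

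Next I would verify the two properties needed for $\tilde g^L_t$ to take values in $\mathcal{Q}^{\theta_t}$. The map $x \mapsto q_t(x)$ is measurable, being the difference of the measurable map $x \mapsto g^L_t(x, h^R_t)$ and the constant $\bar g^L_t(h^R_t)$; and integrating against $\theta_t$ gives
\[
\int_{\R^{n_X}} q_t(x)\,\theta_t(dx) = \bar g^L_t(h^R_t) - \bar g^L_t(h^R_t) = 0,
\]
so the zero-mean constraint defining $\mathcal{Q}^{\theta_t}$ is met and $\tilde g^L_t$ indeed maps $\mathcal{H}^R_t$ into $\mathcal{Q}^{\theta_t}$. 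Everything in this step is routine bookkeeping of the mean/residual split.

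The remaining and, I expect, only real obstacle is establishing integrability and measurability of $\bar g^L_t$. For integrability it suffices to restrict attention to strategies of finite cost, which are the only ones relevant for Problem~\ref{problem1}: since $R_t$ is PD, finiteness of $J(g^L,g^R)$ forces $\ee^{g^L,g^R}[\|U^L_t\|^2] < \infty$, i.e. $\int_{\R^{n_X}} \|g^L_t(x, h^R_t)\|^2\,\theta_t(dx) < \infty$ for almost every $h^R_t$, so the integral defining $\bar g^L_t$ converges a.s. Measurability of $h^R_t \mapsto \bar g^L_t(h^R_t)$ then follows from the standard fact that a conditional expectation admits a measurable version; equivalently, since $\R^{n_X}$ is Polish, the regular conditional distribution $h^R_t \mapsto \theta_t$ can be chosen so that $h^R_t \mapsto \int g^L_t(x,h^R_t)\,\theta_t(dx)$ is measurable. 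On the probability-zero set of realizations $h^R_t$ where integrability fails one may simply set $\bar g^L_t(h^R_t)=0$ and $q_t = g^L_t(\cdot,h^R_t)$, which does not affect the argument since such realizations occur with probability zero.
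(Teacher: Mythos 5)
Your proof is correct and follows essentially the same route as the paper's: you decompose $g^L_t(\cdot,h^R_t)$ into its conditional mean under $\theta_t$, namely $\bar g^L_t(h^R_t)$, plus the zero-mean residual $q_t$, and then verify measurability and the zero-integral constraint defining $\mathcal{Q}^{\theta_t}$, exactly as in the paper. Your extra care about integrability (restricting to finite-cost strategies so that $\int \|g^L_t(x,h^R_t)\|\,\theta_t(dx)<\infty$ a.s.) and about choosing a measurable version of the conditional expectation addresses details the paper's proof silently assumes, but it is a refinement of the same argument rather than a different one.
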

\begin{proof}[Proof of Lemma \ref{lm:strategyspace}]
Define
\begin{align}
&\bar g^L_t(h^R_t)=\ee^{g^L_{0:t-1},g^R_{0:t-1}}
\left[g^L_t(X_t,h^R_t)|h^R_t\right],\\
&q_t(\cdot)=\tilde g^L_t(h^R_t)(\cdot)=g^L_t(\cdot,h^R_t)-\bar g^L_t(h^R_t).
\end{align}
Since $g^L_t(x_t,h^R_t)$ is measurable, $\bar g^L_t(h^R_t)$ is also measurable.  
For each $h^R_t \in \mathcal{H}^R_t$, $q_t(\cdot)=\tilde g^L_t(h^R_t)(\cdot)$ is a measurable function because $g^L_t(x_t,h^R_t)$ is measurable. Furthermore,
\begin{align*}
\int_{\R^{n_X}}q_t(x)\theta_t(d x) &=\int_{\R^{n_X}}g^L_t(x,h^R_t)\theta_t(d x) 
 \nonumber\\
&
-\ee^{g^L_{0:t-1},g^R_{0:t-1}}\left[g^L_t(X_t,h^R_t)\middle|h^R_t\right]=0.
\end{align*}
The last equality follows from \eqref{eq:thetat}. Therefore, $q_t \in \mathcal{Q}^{\theta_t}$.
\end{proof}
Note that $q_t$ belongs to $\mathcal{Q}^{\theta_t}$ and is itself a function of $h^R_t$.

From Lemma \ref{lm:strategyspace}, for any strategies $g^L \in \hat{\mathcal{G}}^L$ and $g^R \in \mathcal{G}^R$
we have a corresponding representation of the strategy $g^L_t$ of $C^L$ in terms of $\tilde{g}^L_t$ and $\bar{g}^L_t$. 


Using the above representation of $C^L$'s strategy, we can show that the common belief $\theta_t$ is an information state with a sequential update function.

For any $x \in \R^{n_X}$, let $\delta_x\in \Delta(\R^{n_X})$ denote the Dirac delta distribution at point $x$
, that is, for any measurable set $E \subset \R^{n_X}$, $\delta_x(E) = 1$ if $x \in E$, and otherwise $\delta_x(E) = 0$.
Define $\varphi : \R^{n_X} \mapsto \Delta(\R^{n_X})$ such that $\varphi (x) = \delta_x$ for any $x \in \R^n$.

\begin{lemma}
\label{lm:beliefupdate}
For any strategies $g^L \in \hat{\mathcal{G}}^L$ and $g^R \in \mathcal{G}^R$, let $(\bar g^L_{t},\tilde g^L_{t})$ be the representation of $g^L_t$ given by 
Lemma \ref{lm:strategyspace}.
Then the common beliefs $\{\theta_{t},t=0,1,\dots,T\}$, defined by \eqref{eq:thetat}, can be sequentially updated according to
\begin{align}
\theta_0
= &\left\{\begin{array}{ll}
\pi_{X_0}
& \text{ if }z_{0}=\emptyset, 
\\
\varphi(x_0) & \text{ if }z_{0}=x_{0}.
\end{array}\right.
\label{eq:theta0}
\\
\theta_{t+1} 
=&\psi_t(\theta_t,u^R_t,\bar u^L_t,q_t,z_{t+1}),
\label{eq:theta_update}
\end{align}
where $u^R_t,\bar u^L_t$ and $q_t$ are functions of the common information $h^R_t$ given by 
\begin{align}
u^R_t =  g^R_t(h^R_t), \hspace{2mm} \bar u^L_t = \bar g^L_t(h^R_t), \hspace{2mm} 
q_t = \tilde g^L_t(h^R_t).
\label{eq:qt}
\end{align}
Furthermore, $\psi_t(\theta_t,u^R_t,\bar u^L_t,q_t,x_{t+1}) = \varphi(x_{t+1})$ and
$\psi_t(\theta_t,u^R_t,\bar u^L_t,q_t,\emptyset)$ is a distribution on $\R^{n_X}$ such that for any measurable set $E \subset \R^{n_X}$, 
\begin{align}
&\psi_t(\theta_t,u^R_t,\bar u^L_t,q_t, \emptyset) (E)=
\nonumber\\
& \int_{\R^{n_X}} \int_{\R^{n_X}} \mathds{1}_{E}(Ax_t + B^L(\bar u^L_t+q_t(x_t))+B^R u^R_t +w_t) 
\nonumber\\
&\hspace{2cm}
\theta_t(dx_t) \pi_{W_t}(dw_t).
\label{eq:psit}
\end{align}
%
%

\end{lemma}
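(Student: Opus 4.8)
The plan is to verify the update rule directly from the definition of the common belief in \eqref{eq:thetat}, treating the initialization, the successful-transmission branch, and the packet-drop branch separately. The initialization and the successful-transmission branch are immediate; essentially all of the content lies in deriving the integral expression \eqref{eq:psit} for the packet-drop branch.

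For the initialization, note that at $t=0$ the common information is $h^R_0 = \{z_0\}$. If $z_0 = \emptyset$ nothing about $X_0$ is revealed, so $\theta_0 = \pi_{X_0}$; if $z_0 = x_0$ then $X_0$ is known exactly and $\theta_0 = \delta_{x_0} = \varphi(x_0)$, which gives \eqref{eq:theta0}. For the update, I would first observe that $H^R_{t+1} = \{H^R_t, Z_{t+1}, U^R_t\}$ and that $U^R_t = g^R_t(H^R_t)$ is a deterministic function of $H^R_t$, so conditioning on $h^R_{t+1}$ adds only the new observation $Z_{t+1}$. When $Z_{t+1} = x_{t+1}$ the state is observed exactly, hence $\theta_{t+1} = \delta_{x_{t+1}} = \varphi(x_{t+1})$, which is the claimed value of $\psi_t$ on the successful-transmission branch.

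The substantive branch is $Z_{t+1} = \emptyset$, equivalently $\Gamma_{t+1} = 0$. Here I would compute, for a measurable set $E$,
\begin{align}
\theta_{t+1}(X_{t+1}\in E) = \prob^{g^L_{0:t},g^R_{0:t}}(X_{t+1}\in E \mid h^R_t, \Gamma_{t+1}=0).
\nonumber
\end{align}
Since $\Gamma_{t+1}$ is independent of all variables before time $t+1$ (in particular of $X_{t+1}$ and of $h^R_t$), conditioning on $\{\Gamma_{t+1}=0\}$ does not change the distribution of $X_{t+1}$, so this equals $\prob^{g^L_{0:t},g^R_{0:t}}(X_{t+1}\in E \mid h^R_t)$. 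I would then substitute the dynamics \eqref{Model:system} together with the representation $U^L_t = \bar u^L_t + q_t(X_t)$ from Lemma \ref{lm:strategyspace} and the fact that $u^R_t,\bar u^L_t$ are functions of $h^R_t$, obtaining $X_{t+1} = A X_t + B^L(\bar u^L_t + q_t(X_t)) + B^R u^R_t + W_t$. Finally, conditioned on $h^R_t$ we have $X_t \sim \theta_t$ by definition \eqref{eq:thetat}, while $W_t \sim \pi_{W_t}$ and is independent of $X_t$ given $h^R_t$ (because $W_t$ is independent of the primitive variables that generate $X_t$ and $h^R_t$); integrating the indicator against this product measure yields exactly \eqref{eq:psit}.

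The main obstacle is the careful handling of the conditioning in the packet-drop branch: one must argue that the event $\{\Gamma_{t+1}=0\}$ carries no information about $X_{t+1}$ and can therefore be dropped from the conditioning, which is precisely where the independence assumption on $\Gamma_{t+1}$ is essential, and one must justify that $X_t$ and $W_t$ are conditionally independent given $h^R_t$ with the stated marginals so that the expectation factors into the double integral. Both points reduce to the independence structure of $\{X_0, W_{0:T}\}$ and $\{\Gamma_t\}$ posited in the model, together with the observation that all control actions are deterministic given the fixed strategies $g^L_{0:t},g^R_{0:t}$ and the underlying primitive random variables.
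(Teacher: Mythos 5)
Your proposal is correct and follows essentially the same route as the paper's proof: initialization from \eqref{eq:thetat}, the Dirac-delta success branch, and the packet-drop branch handled by dropping the conditioning on $\{\Gamma_{t+1}=0\}$ via the independence of $\Gamma_{t+1}$, substituting the dynamics with the Lemma \ref{lm:strategyspace} representation $U^L_t = \bar u^L_t + q_t(X_t)$, and factoring the expectation over $X_t \sim \theta_t$ and $W_t \sim \pi_{W_t}$ into the double integral \eqref{eq:psit}. Your additional explicit remarks (that $U^R_t$ is determined by $h^R_t$ so conditioning adds only $Z_{t+1}$, and that $X_t$ and $W_t$ are conditionally independent given $h^R_t$) merely spell out steps the paper leaves implicit, and your derived expression visibly depends only on $(\theta_t, u^R_t, \bar u^L_t, q_t, z_{t+1})$, which justifies the well-definedness of $\psi_t$ as the paper notes.
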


Using the common belief and its update function, we define a class of strategies which select actions depending on the common belief $\theta_t$ instead of the entire common information $h^R_t$.
\begin{definition}
We define the set of common belief based strategies $\mathcal{G}^{C}\subset\mathcal{G}^{L}\times\mathcal{G}^{R}$.
For any $(g^L,g^R)\in \mathcal{G}^{C}$ we have the following.
At any time $t$, for each $h^R_t$, let $\theta_t$ be the common belief constructed by \eqref{eq:theta0}-\eqref{eq:qt} in Lemma \ref{lm:beliefupdate}.
Then, there exists $g^{R,C}_t:\Delta(\R^{n_X})\mapsto\R^{n_R}$,
$\bar g^{L,C}_t:\Delta(\R^{n_X})\mapsto\R^{n_L}$ and $\tilde g^{L,C}_t:\Delta(\R^{n_X})\mapsto\mathcal{Q}^{\theta_t}$ such that
\begin{align}
&g^R_t(h^R_t) = g^{R,C}_t(\theta_t),\\
&g^L_t(h^R_t) = \bar g^{L,C}_t(\theta_t)+\tilde g^{L,C}_t(\theta_t)(x_t).
\end{align}

\end{definition}

Our main result of this section is the dynamic program provided in the theorem below. 
\begin{theorem}
\label{thm:structure}
Suppose there are value functions $\{ V_{t}:\Delta(\R^{n_X})\mapsto \R \text{ for }t=0,1,\dots,T+1\}$ such that $V_{T+1} = 0$, and
for each time $t$ and for each $\theta_t \in\Delta(\R^{n_X})$ 
\begin{align}
&V_t(\theta_t) 
= \min_{ q_t\in \mathcal{Q}^{\theta_t} }\Big\lbrace \min_{\bar u^L_t\in\R^{n_L}, u^R_t\in\R^{n_R}} \Big \lbrace
\nonumber\\
& \int_{\R^{n_X}} c_t (x_t, \bar u^L_t + q_t(x_t), u_t^R) \theta_t(dx_t)
\nonumber\\
&+(1-p)  \int_{\R^{n_X}}  V_{t+1} \big(\varphi(x_{t+1})\big)  
\psi_t(\theta_t,u^R_t,\bar u^L_t,q_t, \emptyset)(dx_{t+1})
\nonumber\\
&
+ pV_{t+1}(\psi_t(\theta_t,u^R_t,\bar u^L_t,q_t, \emptyset))
\Big \rbrace \Big \rbrace.
\label{eq:DP_V}
\end{align}
If there are strategies $(g^{L*},g^{R*})\in \mathcal{G}^C$ with 
\begin{align}
&g^{R*}_t(h^R_t) = g^{R,C*}_t(\theta_t),
\label{eq:gR_new}
\\
&g^{L*}_t(h^R_t) = \bar g^{L,C*}_t(\theta_t)+\tilde g^{L,C*}_t(\theta_t)(x_t)
\label{eq:gL_new}
\end{align}
such that for each $h^R_t$
\begin{align}
\!u^{R*}_t\!=\!g^{R,C*}_t(\theta_t), \hspace{2mm}
\bar u^{L*}_t\!=\!\bar g^{L,C*}_t(\theta_t), \hspace{2mm}
q_t^*\!=\! &\tilde g^{L,C*}_t(\theta_t),
\label{eq:qstart_new}
\end{align}
achieve the minimum in the definition of $V_t(\theta_t)$,
where $\theta_t$ is the common belief constructed by \eqref{eq:theta0}-\eqref{eq:qt} in Lemma \ref{lm:beliefupdate}.
Then $g^{L*},g^{R*}$ are optimal.

\end{theorem}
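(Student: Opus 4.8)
The plan is to prove the theorem by a verification / backward-induction argument that recasts Problem~\ref{problem1} as a centralized Markov decision problem over the common belief $\theta_t$, and then shows that any $V_t$ satisfying \eqref{eq:DP_V} lower-bounds the conditional cost-to-go while the proposed strategies attain this bound. First I would reduce the decentralized problem to one whose controls are the prescriptions $(u^R_t,\bar u^L_t,q_t)$ chosen as functions of the common information $h^R_t$. By Lemma~\ref{lm:first_structure} it suffices to optimize over $g^L\in\hat{\mathcal{G}}^L$, and by Lemma~\ref{lm:strategyspace} every such $g^L_t$ decomposes as $\bar u^L_t+q_t(\cdot)$ with $\bar u^L_t\in\R^{n_L}$ and $q_t\in\mathcal{Q}^{\theta_t}$. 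The two facts I must then establish are: (i) the conditional expected instantaneous cost given $h^R_t$ equals $\int_{\R^{n_X}} c_t(x_t,\bar u^L_t+q_t(x_t),u^R_t)\,\theta_t(dx_t)$, which follows because $\theta_t$ is by \eqref{eq:thetat} the conditional law of $X_t$ given $h^R_t$ while the prescriptions are deterministic given $h^R_t$; and (ii) the belief $\theta_{t+1}$ is a controlled Markov state, i.e.\ conditioned on $h^R_t$ its law depends only on $(\theta_t,u^R_t,\bar u^L_t,q_t)$. For (ii) I would use Lemma~\ref{lm:beliefupdate}: since $\Gamma_{t+1}$ is Bernoulli$(1-p)$ and independent of the past, with probability $1-p$ the channel delivers $z_{t+1}=x_{t+1}$ and $\theta_{t+1}=\varphi(x_{t+1})$ with $x_{t+1}$ drawn from the predictive law $\psi_t(\theta_t,u^R_t,\bar u^L_t,q_t,\emptyset)$ of \eqref{eq:psit}, while with probability $p$ we have $z_{t+1}=\emptyset$ and $\theta_{t+1}=\psi_t(\theta_t,u^R_t,\bar u^L_t,q_t,\emptyset)$.

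Next I would prove by backward induction on $t$ that for \emph{every} pair $(g^L,g^R)\in\hat{\mathcal{G}}^L\times\mathcal{G}^R$,
\begin{align}
\ee^{g^L,g^R}\!\left[\sum_{s=t}^{T} c_s(X_s,U^L_s,U^R_s)\,\middle|\,h^R_t\right]\ \geq\ V_t(\theta_t).
\label{eq:plan_lb}
\end{align}
The base case $t=T+1$ holds since $V_{T+1}=0$. For the induction step I would condition on $h^R_t$, split the cost-to-go into the instantaneous term from (i) and the future term, apply the tower property together with the transition law from (ii), and invoke the induction hypothesis at $t+1$ to bound the future term below by $(1-p)\int_{\R^{n_X}} V_{t+1}(\varphi(x_{t+1}))\,\psi_t(\theta_t,u^R_t,\bar u^L_t,q_t,\emptyset)(dx_{t+1})+pV_{t+1}(\psi_t(\theta_t,u^R_t,\bar u^L_t,q_t,\emptyset))$. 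Since the prescriptions $(u^R_t,\bar u^L_t,q_t)$ induced by any admissible strategy are, for the realized $\theta_t$, feasible points of the minimization defining $V_t(\theta_t)$ in \eqref{eq:DP_V} (in particular $q_t\in\mathcal{Q}^{\theta_t}$), the resulting expression is at least $V_t(\theta_t)$, which gives \eqref{eq:plan_lb}.

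Finally I would show the candidate strategies attain the bound. Repeating the induction with $(g^{L*},g^{R*})$, the minimizing choice \eqref{eq:qstart_new} turns each inequality into an equality, so \eqref{eq:plan_lb} holds with equality for $(g^{L*},g^{R*})$. Evaluating at $t=0$ and taking expectation over $\theta_0$ from \eqref{eq:theta0} yields $J(g^{L*},g^{R*})=\ee[V_0(\theta_0)]\leq J(g^L,g^R)$ for all admissible pairs, and optimality over the full strategy class follows from Lemma~\ref{lm:first_structure}. Note that the theorem hypotheses already grant measurable minimizing strategies in $\mathcal{G}^C$, so no separate existence argument is needed here.

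The main obstacle I anticipate is establishing the controlled-Markov property in step (ii) rigorously over the continuous state space $\R^{n_X}$: as the paper emphasizes, the common-information results of \cite{nayyar2013decentralized} are stated only for finite spaces, so I must verify measurability of the prescriptions and of the map $\psi_t$, and justify the iterated-expectation identity $\ee^{g^L,g^R}[V_{t+1}(\theta_{t+1})\mid h^R_t]=(1-p)\int V_{t+1}(\varphi(x_{t+1}))\,\psi_t(\cdots,\emptyset)(dx_{t+1})+pV_{t+1}(\psi_t(\cdots,\emptyset))$ directly rather than by citation. Extra care is also needed because the control $q_t$ is function-valued and ranges over the \emph{belief-dependent} set $\mathcal{Q}^{\theta_t}$; I must confirm that the feasible set in \eqref{eq:DP_V} indeed contains the prescriptions arising from arbitrary strategies so that the lower bound \eqref{eq:plan_lb} is not vacuous.
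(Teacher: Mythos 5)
Your proposal is correct and follows essentially the same verification/backward-induction argument as the paper's appendix proof: reduce to prescriptions $(u^R_t,\bar u^L_t,q_t)$ via Lemmas \ref{lm:first_structure} and \ref{lm:strategyspace}, use the tower property, the belief update of Lemma \ref{lm:beliefupdate}, and the independence of $\Gamma_{t+1}$ from $X_{t+1}$ to split $\ee[V_{t+1}(\theta_{t+1})\mid h^R_t]$ into the $(1-p)$ and $p$ terms, and conclude by feasibility of the induced prescriptions in \eqref{eq:DP_V} together with attainment by the minimizers. The only difference is organizational --- you run a lower-bound induction for all strategies and a separate attainment pass for $(g^{L*},g^{R*})$, while the paper carries both simultaneously via the hybrid strategy $g'=\{g^{L}_{0:t-1},g^{R}_{0:t-1},g^{L*}_{t:T},g^{R*}_{t:T}\}$, which also yields the measurability of $V_t(\prob^{g^L_{0:t-1},g^R_{0:t-1}}(dx_t|h^R_t))$ that you correctly flag as needing verification.
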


Theorem \ref{thm:structure} provides a dynamic program to solve the two-controller problem. However, there are two challenges in solving the dynamic program. First, it is a dynamic program on the belief space $\Delta(\R^{n_X})$ which is infinite dimensional. 
Second, each step of the dynamic program involves a functional optimization over the functional space $\mathcal{Q}^\theta$. 
Nevertheless, in the next section, we show that it is possible to find an exact solution to the dynamic program of Theorem \ref{thm:structure}, and provide optimal strategies for the controllers.

\section{Optimal Control Strategies}
\label{sec:solution}

In this section, we identify the structure of the value function in the dynamic program \eqref{eq:DP_V}.
Using the structure, we explicitly solve the dynamic program and obtain the optimal strategies for Problem \ref{problem1}.

For a vector $x$ and a matrix $G$, we use
\begin{align}
QF(G,x)= x^\tp G\, x = \tr(Gxx^\tp)
\label{eq:QF}
\end{align}
to denote the quadratic form.

The main result of this section, stated in the theorem below, presents the structure of the value function and an explicit optimal solution of the dynamic program \eqref{eq:DP_V}.
\begin{theorem}
\label{thm:Sol_packetdrop}

For any $\theta_t$ and any time $t$, the value function of the dynamic program \eqref{eq:DP_V} in Theorem \ref{thm:structure} is given by  
\begin{align}
V_t(\theta_t) = 
&QF\left(P_t,   \mu (\theta_t)  \right) 
+ 
\tr\left(\tilde P_t \cov(\theta_t) \right)
 + e_{t}
,
\label{eq:Vt_PacketDrop}
\\
e_{t} = &\sum_{s = t}^T \tr(((1-p)P_{s+1}+p\tilde P_{s+1})\cov(\pi_{W_s})),
\end{align}
and the optimal solution is given by
\begin{align}
&\left[\begin{array}{l}
\bar u^{L*}_t\\
u^{R*}_t 
\end{array}\right]
=
\left[\begin{array}{l}
\bar g^{L,C*}_t(\theta_t) \\
g^{R,C*}_t(\theta_t) 
\end{array}\right]
\nonumber\\
=&-
\left[\begin{array}{ll}
 G^{LL}_t & G^{LR}_t \\
 G^{RL}_t & G^{RR}_t
\end{array}\right]^{-1}
\left[\begin{array}{l}
G^{LX}_t \\
G^{RX}_t
\end{array}\right]
\mu (\theta_t),
\label{eq:opt_ubar}
\\
&q^{*}_t(x_t ) = \tilde g^{L,C*}_t(\theta_t)(x_t) \nonumber\\
= &-\left(\tilde G^{LL}_t\right)^{-1}\tilde G^{LX}_t\left(x_t - \mu (\theta_t) \right).
\label{eq:opt_gamma}
\end{align}
The matrices $P_t,G_t,H_t,\tilde P_t,\tilde G_t,\tilde H_t$ defined recursively below are symmetric positive semi-definite (PSD); $G_t$ and $\tilde G_t$ are symmetric positive definite (PD).
\begin{align}
P_{T+1} = &\tilde P_{T+1} = \mathbf{0}, \text{ the all zeros matrix},
\\
P_t = &G^{XX}_t \nonumber\\
&\hspace{-1.5cm}-\left[\begin{array}{ll}
G^{XL}_t & G^{XR}_t
\end{array}\right]
\left[\begin{array}{ll}
 G^{LL}_t & G^{LR}_t \\
 G^{RL}_t & G^{RR}_t
\end{array}\right]^{-1}
\left[\begin{array}{l}
G^{LX}_t \\
G^{RX}_t
\end{array}\right],
\\
G_t = &\left[\begin{array}{lll}
G^{XX}_t & G^{XL}_t & G^{XR}_t \\
G^{LX}_t & G^{LL}_t & G^{LR}_t \\
G^{RX}_t & G^{RL}_t & G^{RR}_t
\end{array}\right]
=R_t + H_t,
\\
H_t = &\left[A,B^L,B^R\right]^\tp
P_{t+1}
\left[A,B^L,B^R\right],
\end{align}

\begin{align}
\tilde P_t = &\tilde G^{XX}_t - 
\tilde G^{XL}_t (\tilde G^{LL}_t)^{-1}\tilde G^{LX}_t,
\\
\tilde G_t  = &\left[\begin{array}{lll}
\tilde G^{XX}_t & \tilde G^{XL}_t & \tilde G^{XR}_t \\
\tilde G^{LX}_t & \tilde G^{LL}_t & \tilde G^{LR}_t \\
\tilde G^{RX}_t & \tilde G^{RL}_t & \tilde G^{RR}_t
\end{array}\right]
\nonumber\\
= &
R_t + (1-p)H_t + p\tilde H_t,
 \\
 \tilde H_t = &\left[A,B^L,B^R\right]^\tp
\tilde P_{t+1}
\left[A,B^L,B^R\right].
\end{align}
\end{theorem}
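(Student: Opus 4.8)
The plan is to prove the value-function formula and the optimal strategies simultaneously by backward induction on $t$, driven by the dynamic program \eqref{eq:DP_V}. The base case $t=T+1$ is immediate: $V_{T+1}=0$ equals $QF(P_{T+1},\mu(\theta))+\tr(\tilde P_{T+1}\cov(\theta))+e_{T+1}$ because $P_{T+1}=\tilde P_{T+1}=\mathbf{0}$ and the sum defining $e_{T+1}$ is empty. For the inductive step I would assume the claimed form for $V_{t+1}$ and substitute it into \eqref{eq:DP_V}, evaluating the three terms: the running cost, the $(1-p)$ "packet received" term, and the $p$ "packet dropped" term.

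The key device is to split the relevant vector into a deterministic mean part and a zero-mean fluctuation: write $v=\vecc(\mu(\theta_t),\bar u^L_t,u^R_t)$ and $\tilde v(x)=\vecc(x-\mu(\theta_t),q_t(x),0)$, so that $\vecc(x,\bar u^L_t+q_t(x),u^R_t)=v+\tilde v(x)$. Crucially $\int\tilde v(x)\,\theta_t(dx)=0$, since $\int(x-\mu(\theta_t))\theta_t(dx)=0$ and $\int q_t(x)\theta_t(dx)=0$ by definition of $\mathcal Q^{\theta_t}$. Expanding the running cost $\int c_t\,\theta_t(dx)$, the cross term between $v$ and $\tilde v$ integrates to zero, leaving $QF(R_t,v)+\int QF(R_t,\tilde v(x))\theta_t(dx)$. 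Writing $X_{t+1}=[A,B^L,B^R]\,(v+\tilde v(X_t))+W_t$ and using that $W_t$ is zero-mean and independent, the received-packet term contributes $(1-p)\big(QF(H_t,v)+\int QF(H_t,\tilde v)\,\theta_t(dx)\big)$ (the Dirac belief $\varphi(x_{t+1})$ has mean $x_{t+1}$ and zero covariance, so $V_{t+1}(\varphi(x_{t+1}))=QF(P_{t+1},x_{t+1})+e_{t+1}$, and $[A,B^L,B^R]^\tp P_{t+1}[A,B^L,B^R]=H_t$), while the dropped-packet term, through $\mu(\psi_t)=[A,B^L,B^R]v$ and $\cov(\psi_t)=[A,B^L,B^R]\cov(\tilde v(X_t))[A,B^L,B^R]^\tp+\cov(W_t)$, contributes $p\big(QF(H_t,v)+\int QF(\tilde H_t,\tilde v)\,\theta_t(dx)\big)$. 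Collecting coefficients, the $v$-quadratic sums to $QF(R_t+H_t,v)=QF(G_t,v)$, the $\tilde v$-quadratic to $\int QF(R_t+(1-p)H_t+p\tilde H_t,\tilde v)\theta_t(dx)=\int QF(\tilde G_t,\tilde v)\theta_t(dx)$, and the noise terms to exactly $e_t$.

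Because $v$ depends only on $(\bar u^L_t,u^R_t)$ and $\tilde v$ only on $q_t$, the minimization decouples. The mean problem $\min_{\bar u^L_t,u^R_t}QF(G_t,v)$ is a finite-dimensional quadratic whose minimizer is the linear law \eqref{eq:opt_ubar} and whose optimal value is the Schur complement $QF(P_t,\mu(\theta_t))$, matching the definition of $P_t$. The fluctuation problem $\min_{q_t\in\mathcal Q^{\theta_t}}\int QF(\tilde G_t,\tilde v(x))\theta_t(dx)$ is where I expect the only real subtlety, since it is an optimization over an infinite-dimensional function class. I would resolve it by noting that the integrand is, for each fixed $x$, a strictly convex quadratic in the value $q_t(x)$ (the $R$-slot being zero), so the \emph{pointwise} unconstrained minimizer is $q^*_t(x)=-(\tilde G^{LL}_t)^{-1}\tilde G^{LX}_t(x-\mu(\theta_t))$; this candidate satisfies $\int q^*_t\,\theta_t(dx)=0$, hence already lies in $\mathcal Q^{\theta_t}$ and therefore also solves the constrained problem. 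Its value is $\int QF(\tilde P_t,x-\mu(\theta_t))\theta_t(dx)=\tr(\tilde P_t\cov(\theta_t))$, giving exactly the covariance term, so the two solved subproblems reproduce \eqref{eq:Vt_PacketDrop} and the optimal strategies \eqref{eq:opt_ubar}--\eqref{eq:opt_gamma}.

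Finally I would close the induction on the definiteness claims, which legitimize the inverses used above: starting from $P_{T+1}=\tilde P_{T+1}=\mathbf{0}$ (PSD), the matrices $H_t=[A,B^L,B^R]^\tp P_{t+1}[A,B^L,B^R]$ and $\tilde H_t=[A,B^L,B^R]^\tp\tilde P_{t+1}[A,B^L,B^R]$ are PSD because $P_{t+1},\tilde P_{t+1}$ are, so $G_t=R_t+H_t$ and $\tilde G_t=R_t+(1-p)H_t+p\tilde H_t$ are PD since $R_t$ is PD; then $P_t$ and $\tilde P_t$, being Schur complements within the relevant principal submatrices of the PD matrices $G_t$ and $\tilde G_t$, are PD (hence PSD), and the inverted blocks are nonsingular. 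Since the minimizers depend on $\theta_t$ only through $\mu(\theta_t)$ and $q^*_t\in\mathcal Q^{\theta_t}$, they define admissible common-belief-based strategies in $\mathcal G^C$, so Theorem \ref{thm:structure} yields optimality. The main obstacle is the functional optimization over $\mathcal Q^{\theta_t}$, and the crux is the observation that its zero-mean constraint is automatically met by the pointwise minimizer, collapsing that infinite-dimensional step to a finite-dimensional one.
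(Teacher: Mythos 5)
Your proposal is correct and takes essentially the same route as the paper: backward induction on the claimed quadratic form, with the minimization decoupled into a finite-dimensional mean problem and a functional fluctuation problem, where your mean/fluctuation split $v+\tilde v(x)$ is exactly the centering used in the paper's Lemma~\ref{lm:quadratic_problems}, and your key observation---that the pointwise unconstrained minimizer already integrates to zero under $\theta_t$ and hence lies in $\mathcal{Q}^{\theta_t}$---is precisely how the paper's proof of part (b) of that lemma collapses the infinite-dimensional step. The only difference is organizational: the paper isolates both quadratic subproblems in a standalone lemma and phrases the DP terms via $\ee[S^{\theta_t}_t]$ and $\cov(S^{\theta_t}_t)$, whereas you inline the same computations.
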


The proof of Theorem \ref{thm:Sol_packetdrop} relies on the following lemma for quadratic optimization problems.

\begin{lemma}
\label{lm:quadratic_problems}
Let $G=\left[\begin{array}{ll}
 G^{XX} & G^{XU} \\
 G^{UX} & G^{UU}
\end{array}\right]$ be a PD matrix and $P:=G^{XX} - G^{XU}  \left(G^{UU}\right)^{-1} G^{UX}$ be the Schur complement of $G^{UU}$ of $G$. 
\begin{enumerate}[(a)]
\item For any constant vector $x \in \R^n$, 
\begin{align}
 &\min_{u \in \R^m} 
QF\left(G,\vecc(x,u)\right) 
= QF\left(P, x\right)
\label{eq:QP1}
\end{align}
with optimal solution 
\begin{align}
u^* = &-\left(G^{UU}\right)^{-1} G^{UX}x.
\label{eq:QP1_sol}
\end{align}
\item For any $\theta \in \Delta(\R^n)$, let $X^{\theta}$ be a random variable with distribution $\theta$, then
\begin{align}
&\min_{q\in\mathcal{Q}^{\theta}}
\tr\left(G\cov\left(\vecc(X^{\theta},q(X^{\theta}))\right)  \right)
\nonumber\\
=& \tr\left(P 
\cov\left(X^{\theta}\right)
\right)
\label{eq:QP2}
\end{align}
with optimal solution
\begin{align}
q^*(X^{\theta}) = &-\left(G^{UU}\right)^{-1} G^{UX}\left(X^{\theta}-\mu(\theta)\right).
\label{eq:QP2_sol}
\end{align}
\end{enumerate}

\end{lemma}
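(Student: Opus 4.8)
The plan is to prove both parts by completing the square on the quadratic form, exploiting the standard block-matrix identity associated with the Schur complement. First I would recall the algebraic identity: for a symmetric block matrix $G=\left[\begin{smallmatrix} G^{XX} & G^{XU} \\ G^{UX} & G^{UU}\end{smallmatrix}\right]$ with $G^{UU}$ invertible, one has the factorization
\begin{align}
QF(G,\vecc(x,u)) = QF(P,x) + QF\big(G^{UU}, u + (G^{UU})^{-1}G^{UX}x\big),
\label{eq:cts_plan}
\end{align}
where $P = G^{XX} - G^{XU}(G^{UU})^{-1}G^{UX}$. This is a purely algebraic matrix identity, verified by expanding the right-hand side and cancelling the cross terms using symmetry ($G^{XU}=(G^{UX})^{\tp}$, $G^{UU}=(G^{UU})^{\tp}$). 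Since $G$ is PD, its principal submatrix $G^{UU}$ is PD as well, so the second term $QF(G^{UU},\cdot)$ is nonnegative and vanishes precisely when its argument is zero.

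For part (a), I would then read off the minimization directly from \eqref{eq:cts_plan}: the first term $QF(P,x)$ is independent of $u$, and the second term is minimized (to zero) by choosing $u^* = -(G^{UU})^{-1}G^{UX}x$, which gives both the optimal value \eqref{eq:QP1} and the optimal solution \eqref{eq:QP1_sol}. This part is essentially the classical linear-quadratic completion-of-squares argument.

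For part (b), the idea is to reduce the trace/covariance problem to part (a) in expectation. I would write $\tr\big(G\cov(\vecc(X^{\theta},q(X^{\theta})))\big)$ using the identity $\tr(G\cov(V)) = \ee\big[QF(G, V-\ee[V])\big]$ for any random vector $V$. Here $V=\vecc(X^{\theta},q(X^{\theta}))$, and since $q\in\mathcal{Q}^{\theta}$ means $\ee[q(X^{\theta})]=\int q\,\theta(dx)=0$, the centered vector is $V-\ee[V] = \vecc(X^{\theta}-\mu(\theta),\, q(X^{\theta}))$. Applying the pointwise factorization \eqref{eq:cts_plan} with $x=X^{\theta}-\mu(\theta)$ and $u=q(X^{\theta})$ and taking expectations yields
\begin{align}
\tr\big(G\cov(V)\big) = \tr\big(P\cov(X^{\theta})\big) + \ee\Big[QF\big(G^{UU},\, q(X^{\theta}) + (G^{UU})^{-1}G^{UX}(X^{\theta}-\mu(\theta))\big)\Big].
\label{eq:trace_plan}
\end{align}
The second term is a nonnegative expectation, so it is minimized to zero by choosing $q^*(X^{\theta})=-(G^{UU})^{-1}G^{UX}(X^{\theta}-\mu(\theta))$, giving \eqref{eq:QP2} and \eqref{eq:QP2_sol}.

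The one subtlety to check — and where I expect the only real care is needed — is the feasibility constraint $q^*\in\mathcal{Q}^{\theta}$: I must verify that the proposed minimizer indeed integrates to zero against $\theta$, i.e. $\int q^*(x)\,\theta(dx) = -(G^{UU})^{-1}G^{UX}\int(x-\mu(\theta))\,\theta(dx) = 0$, which holds because $\int(x-\mu(\theta))\theta(dx)=0$ by definition of the mean. Thus the unconstrained pointwise minimizer happens to lie in the constrained set $\mathcal{Q}^{\theta}$, so the constraint is not binding and the reduction to part (a) is legitimate. The rest is routine bookkeeping with traces and the linearity of expectation.
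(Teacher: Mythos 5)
Your proposal is correct and follows essentially the same route as the paper: part (a) is the standard Schur-complement completion of squares (which the paper dismisses as trivial), and part (b) reduces the covariance problem to part (a) exactly as the paper does --- rewriting $\tr(G\cov(\cdot))$ as an expected centered quadratic form, using $\int q\,d\theta=0$ to center the vector as $\vecc(X^{\theta}-\mu(\theta),q(X^{\theta}))$, minimizing pointwise under the integral, and verifying that the pointwise minimizer $q^*$ lies in $\mathcal{Q}^{\theta}$. The only detail you leave implicit that the paper states is the (immediate) measurability of $q^*$, which holds since $q^*$ is affine in $x$.
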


Using Lemma \ref{lm:quadratic_problems}, we present a sketch of the proof of Theorem \ref{thm:Sol_packetdrop}. The complete proof is in the Appendix.
\begin{proof}[Sketch of the proof of Theorem \ref{thm:Sol_packetdrop}]
The proof is done by induction. 
Suppose the result is true at $t+1$, then at time $t$
\begin{itemize}
\item Show that $G_{t},\tilde G_{t}$ are PD.
\item Apply the induction hypothesis for \eqref{eq:Vt_PacketDrop} and the sequential update of common belief in Lemma \ref{lm:beliefupdate} to obtain
\begin{align}
&V_t(\theta_t) 
= \min_{ q_t\in \mathcal{Q}^{\theta_t} }\left\{ \min_{\bar u^L_t\in\R^{n_L}, u^R_t\in\R^{n_R}}\left\{
\vphantom{\min_{u^R_t,\bar u^L_t}}\right.\right.
 \nonumber\\
&\left.\left.
QF\left(G_t,\ee\left[S^{\theta_t}_t\right]\right)
 +\tr\left(\tilde G_{t}\cov\left(S^{\theta_t}_t\right)  \right)
\vphantom{\min_{u^R_t,\bar u^L_t}} \right\}\right\}.
\label{eq:DP_V_main}
\end{align}
In the above equation, $S^{\theta_t}_t := \vecc(X^{\theta_t},\bar u^L_t+q_t(X^{\theta_t}),u^R_t)$ where $X^{\theta_t}$ is a random vector with distribution $\theta_t$.

\item Since $q_t \in \mathcal{Q}^{\theta_t}$, $\ee [q_t(X^{\theta_t})] =0$ and consequently,
$\ee\left[S^{\theta_t}_t\right]$ depends only on $u^R_t,\bar u^L_t$. Furthermore, $\cov\left(S^{\theta_t}_t\right)$ depends only on $q_t$. Hence, 
\eqref{eq:DP_V_main} is equivalent to solving the following optimization problems
\begin{align}
 &\min_{u^R_t,\bar u^L_t} 
QF\left(G_{t},\vecc(\ee[X^{\theta_t}],\bar u^L_t,u^R_t)\right) ,
\label{eq:DP_V_P1_main}
\\
&\min_{q_t\in\mathcal{Q}_t^{\theta_t}}
\tr\left(\tilde G_{t}\cov\left(\vecc(X^{\theta_t},q_t(X^{\theta_t}),0)\right)  \right).
\label{eq:DP_V_P2_main}
\end{align}

\item Apply Lemma \ref{lm:quadratic_problems} to problems \eqref{eq:DP_V_P1_main} and \eqref{eq:DP_V_P2_main}, then we get \eqref{eq:Vt_PacketDrop} and the optimal solution at $t$.

\end{itemize}

\end{proof}

From Theorem \ref{thm:structure} and Theorem \ref{thm:Sol_packetdrop}, we can explicitly compute the optimal strategies for Problem \ref{problem1}. The optimal strategies of controllers $C^L$ and $C^R$ are shown in the following theorem.

\begin{theorem}
\label{thm:opt_strategies}
The optimal strategies of Problem \ref{problem1} are given by
\begin{align}
\left[\begin{array}{l}
\bar U^{L*}_t\\
U^{R*}_t 
\end{array}\right]
=& -
\left[\begin{array}{ll}
 G^{LL}_t & G^{LR}_t \\
 G^{RL}_t & G^{RR}_t
\end{array}\right]^{-1}
\left[\begin{array}{l}
G^{LX}_t \\
G^{RX}_t
\end{array}\right]
\hat X_t
,
\label{eq:opt_ULbarUR}
\\
U^{L*}_t
= &\bar U^{L*}_t-\left(\tilde G^{LL}_t\right)^{-1}\tilde G^{LX}_t\left(X_t - \hat X_t\right),
\label{eq:opt_UL}
\end{align}
where $\hat X_t$ is the estimate (conditional expectation) of $X_t$ based on the common information $H^R_t$. $\hat X_t$ can be computed recursively according to
\begin{align}
\hat X_0 = &\left\{
\begin{array}{ll}
\mu(\pi_{X_0}) & \text{ if }Z_{0}= \emptyset,\\
 X_{0} & \text{ if }Z_{0} = X_{0}.
\end{array}\right.
\label{eq:estimator_0}
\\
\hat X_{t+1} 
= &\left\{
\begin{array}{ll}
 A \hat X_t + B^L\bar U^{L*}_t+B^R U^{R*}_t & \text{ if }Z_{t+1}= \emptyset,\\
 X_{t+1} & \text{ if }Z_{t+1} = X_{t+1}.
\end{array}\right.
\label{eq:estimator_t}
\end{align}
\end{theorem}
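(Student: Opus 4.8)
The plan is to derive Theorem~\ref{thm:opt_strategies} as a direct corollary of Theorems~\ref{thm:structure} and~\ref{thm:Sol_packetdrop}, with the estimator recursion being the only piece that requires a separate argument. First I would invoke Theorem~\ref{thm:Sol_packetdrop}, which exhibits the minimizers of the dynamic program \eqref{eq:DP_V} at every stage: the mean actions $\bar u^{L*}_t,u^{R*}_t$ of \eqref{eq:opt_ubar} and the correction $q^*_t$ of \eqref{eq:opt_gamma}. Since $q^*_t$ is affine in $x_t$ and satisfies $\int_{\R^{n_X}} q^*_t(x)\,\theta_t(dx)=0$, it lies in $\mathcal{Q}^{\theta_t}$, and all three quantities depend on $\theta_t$ only through its mean $\mu(\theta_t)$; hence they define a legitimate common-belief-based strategy pair in $\mathcal{G}^{C}$. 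By Theorem~\ref{thm:structure}, such strategies are optimal, so the remaining task is purely to rewrite them in terms of observable quantities.

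The key identification is $\mu(\theta_t)=\hat X_t:=\ee[X_t\mid H^R_t]$, which is immediate from the definition \eqref{eq:thetat} of $\theta_t$ as the conditional law of $X_t$ given the common information. Substituting $\mu(\theta_t)=\hat X_t$ into \eqref{eq:opt_ubar} yields \eqref{eq:opt_ULbarUR}; and using the decomposition $U^{L*}_t=\bar u^{L*}_t+q^*_t(X_t)$ from the definition of $\mathcal{G}^{C}$ together with \eqref{eq:opt_gamma} yields \eqref{eq:opt_UL}. The only remaining content is the recursion \eqref{eq:estimator_0}--\eqref{eq:estimator_t} for $\hat X_t$, which I expect to be the main obstacle and which I would establish from the belief update of Lemma~\ref{lm:beliefupdate}.

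For the recursion I would take means in Lemma~\ref{lm:beliefupdate}. The initialization \eqref{eq:estimator_0} follows from \eqref{eq:theta0}: when $z_0=\emptyset$ the belief is $\pi_{X_0}$ so $\hat X_0=\mu(\pi_{X_0})$, while when $z_0=x_0$ the belief is $\varphi(x_0)=\delta_{x_0}$ so $\hat X_0=X_0$. For the update, the successful-transmission case is trivial since $\theta_{t+1}=\varphi(x_{t+1})$ forces $\hat X_{t+1}=X_{t+1}$. The genuine computation is the packet-drop case: from \eqref{eq:psit}, $\psi_t(\theta_t,u^R_t,\bar u^L_t,q_t,\emptyset)$ is the law of $AX^{\theta_t}+B^L(\bar u^L_t+q_t(X^{\theta_t}))+B^R u^R_t+W_t$, so its mean is $A\mu(\theta_t)+B^L\bar u^L_t+B^R u^R_t$ because $\int q_t(x)\theta_t(dx)=0$ (as $q_t\in\mathcal{Q}^{\theta_t}$, by Lemma~\ref{lm:strategyspace}) and $W_t$ has zero mean. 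This is precisely where the decomposition of $C^L$'s strategy pays off: the state-dependent correction $q_t$ carries no conditional-mean information, so $\hat X_t$ obeys a clean linear recursion driven only by the mean actions $\bar u^L_t,u^R_t$. Evaluating at the optimal actions $\bar U^{L*}_t,U^{R*}_t$ gives \eqref{eq:estimator_t} and completes the proof.
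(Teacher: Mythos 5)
Your proposal is correct and takes essentially the same route as the paper's own proof: identify $\mu(\theta_t)=\hat X_t$, read the optimal actions off Theorems \ref{thm:structure} and \ref{thm:Sol_packetdrop}, and obtain \eqref{eq:estimator_0}--\eqref{eq:estimator_t} by taking means in the belief update of Lemma \ref{lm:beliefupdate}, where in the packet-drop case the terms $q_t\in\mathcal{Q}^{\theta_t}$ and $W_t$ drop out of the mean. Your explicit check that the minimizers constitute a strategy pair in $\mathcal{G}^C$ is a detail the paper leaves implicit, not a difference in method.
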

Theorem \ref{thm:opt_strategies} shows that the optimal control strategy of $C^R$ is linear in the estimated state $\hat X_t$, and the optimal control strategy of $C^L$ is linear in both the actual state $X_t$ and the estimated state $\hat X_t$.
Note that even though the local controller $C^L$ perfectly observes the system state, $C^L$ still needs to compute the estimated state $\hat X_t$ to make optimal decisions.


\section{Conclusion}\label{sec:conclusion}

We considered a decentralized optimal control problem for a linear plant controlled by two controllers, a local controller and a remote controller.
The local controller directly observes the state of the plant and can inform the remote controller of the plant state through a packet-drop channel with acknowledgments.
We provided a dynamic program for this decentralized control problem using the common information approach. Although our problem is not partially nested, we obtained explicit optimal strategies for the two controllers. In the optimal strategies, both controllers compute a common estimate of the plant state based on the common information.
The remote controller's action is linear in the common estimated state, and the local controller's action is linear in both the actual state and the common estimated state.


\bibliographystyle{ieeetr}
\bibliography{IEEEabrv,References,packet_drop,collection}

\begin{thebibliography}{10}

\bibitem{HespanhaSurvey}
J.~P. Hespanha, P.~Naghshtabrizi, and Y.~Xu, ``A survey of recent results in
  networked control systems,'' {\em Proceedings of the IEEE}, vol.~95,
  pp.~138--162, Jan 2007.

\bibitem{Schenato2007}
L.~Schenato, B.~Sinopoli, M.~Franceschetti, K.~Poolla, and S.~S. Sastry,
  ``Foundations of control and estimation over lossy networks,'' {\em
  Proceedings of the IEEE}, vol.~95, pp.~163--187, Jan 2007.

\bibitem{Imer2006optimal}
O.~C. Imer, S.~Y{\"u}ksel, and T.~Ba{\c{s}}ar, ``Optimal control of {LTI}
  systems over unreliable communication links,'' {\em Automatica}, vol.~42,
  no.~9, pp.~1429--1439, 2006.

\bibitem{Sinopoli2005}
B.~Sinopoli, L.~Schenato, M.~Franceschetti, K.~Poolla, and S.~S. Sastry,
  ``Optimal control with unreliable communication: the {TCP} case,'' in {\em
  American Control Conference, 2005. Proceedings of the 2005}, pp.~3354--3359
  vol. 5, June 2005.

\bibitem{Sinopoli2006}
B.~Sinopoli, L.~Schenato, M.~Franceschetti, K.~Poolla, and S.~Sastry, ``Optimal
  linear {LQG} control over lossy networks without packet acknowledgment,'' in
  {\em Decision and Control, 2006 45th IEEE Conference on}, pp.~392--397, Dec
  2006.

\bibitem{Elia2004}
N.~Elia and J.~N. Eisenbeis, ``Limitations of linear remote control over packet
  drop networks,'' in {\em Decision and Control, 2004. CDC. 43rd IEEE
  Conference on}, vol.~5, pp.~5152--5157 Vol.5, Dec 2004.

\bibitem{Garone2008}
E.~Garone, B.~Sinopoli, and A.~Casavola, ``{LQG} control over lossy {TCP}-like
  networks with probabilistic packet acknowledgements,'' in {\em Decision and
  Control, 2008. CDC 2008. 47th IEEE Conference on}, pp.~2686--2691, Dec 2008.

\bibitem{LipsaMartins:2011}
G.~M. Lipsa and N.~Martins, ``Remote state estimation with communication costs
  for first-order {LTI} systems,'' {\em {IEEE} Trans. Autom. Control}, vol.~56,
  pp.~2013--2025, Sept. 2011.

\bibitem{NayyarBasarTeneketzisVeeravalli:2013}
A.~Nayyar, T.~Basar, D.~Teneketzis, and V.~Veeravalli, ``Optimal strategies for
  communication and remote estimation with an energy harvesting sensor,'' {\em
  {IEEE} Trans. Autom. Control}, vol.~58, no.~9, pp.~2246--2260, 2013.

\bibitem{BansalBasar:1989a}
R.~Bansal and T.~Ba\c{s}ar, ``Simultaneous design of measurement channels and
  control stategies for stochastic systems with feedback,'' {\em Automatica},
  vol.~25, pp.~679--694, 1989.

\bibitem{TatikondaSahaiMitter:2004}
S.~Tatikonda, A.~Sahai, and S.~K. Mitter, ``Stochastic linear control over a
  communication channel,'' {\em {IEEE} Trans. Autom. Control}, vol.~49,
  pp.~1549--1561, Sept. 2004.

\bibitem{NairFagnaniZampieriEvan:2007}
G.~N. Nair, F.~Fagnani, S.~Zampieri, and R.~J. Evans, ``Feedback control under
  data rate constraints: An overview,'' {\em Proc. {IEEE}}, vol.~95,
  pp.~108--137, Jan. 2007.

\bibitem{molin2013optimality}
A.~Molin and S.~Hirche, ``On the optimality of certainty equivalence for
  event-triggered control systems,'' {\em {IEEE} Trans. Autom. Control},
  vol.~58, no.~2, pp.~470--474, 2013.

\bibitem{rabi2014separated}
M.~Rabi, C.~Ramesh, and K.~H. Johansson, ``Separated design of encoder and
  controller for networked linear quadratic optimal control,'' {\em arXiv
  preprint arXiv:1405.0135}, 2014.

\bibitem{Witsenhausen:1968}
H.~S. Witsenhausen, ``A counterexample in stochastic optimum control,'' {\em
  {SIAM} Journal of Optimal Control}, vol.~6, no.~1, pp.~131--147, 1968.

\bibitem{LipsaMartins:2011b}
G.~M. Lipsa and N.~C. Martins, ``Optimal memoryless control in {Gaussian}
  noise: A simple counterexample,'' {\em Automatica}, vol.~47, no.~3,
  pp.~552--558, 2011.

\bibitem{blondel2000survey}
V.~D. Blondel and J.~N. Tsitsiklis, ``A survey of computational complexity
  results in systems and control,'' {\em Automatica}, vol.~36, no.~9,
  pp.~1249--1274, 2000.

\bibitem{YukselBasar:2013}
S.~Y\"uksel and T.~Ba\c{s}ar, {\em {Stochastic Networked Control Systems:
  Stabilization and Optimization under Information Constraints}}.
\newblock Boston, MA: Birkh\"auser, 2013.

\bibitem{Radner:1962}
R.~Radner, ``Team decision problems,'' {\em Annals of Mathmatical Statistics},
  vol.~33, pp.~857--881, 1962.

\bibitem{HoChu:1972}
Y.-C. Ho and K.-C. Chu, ``Team decision theory and information structures in
  optimal control problems--{Part I},'' {\em {IEEE} Trans. Autom. Control},
  vol.~17, no.~1, pp.~15--22, 1972.

\bibitem{LamperskiDoyle:2011}
A.~Lamperski and J.~Doyle, ``On the structure of state-feedback {LQG}
  controllers for distributed systems with communication delays,'' in {\em
  Proc. 50th IEEE Conf. Decision and Control and European Control Conf
  (CDC-ECC)}, pp.~6901--6906, Dec. 2011.

\bibitem{LessardNayyar:2013}
L.~Lessard and A.~Nayyar, ``Structural results and explicit solution for
  two-player {LQG} systems on a finite time horizon,'' in {\em IEEE 52nd Annual
  Conference on Decision and Control (CDC), 2013}, pp.~6542--6549, Dec 2013.

\bibitem{ShahParrilo:2013}
P.~Shah and P.~Parrilo, ``{${\cal H}_{2}$}-optimal decentralized control over
  posets: A state-space solution for state-feedback,'' {\em {IEEE} Trans.
  Autom. Control}, vol.~58, pp.~3084--3096, Dec. 2013.

\bibitem{Nayyar_Lessard_2015}
A.~Nayyar and L.~Lessard, ``Structural results for partially nested lqg systems
  over graphs,'' in {\em American Control Conference (ACC), 2015},
  pp.~5457--5464, July 2015.

\bibitem{Lessard_Lall_2015}
L.~Lessard and S.~Lall, ``Optimal control of two-player systems with output
  feedback,'' {\em IEEE Transactions on Automatic Control}, vol.~60,
  pp.~2129--2144, Aug. 2015.

\bibitem{Yuksel:2009}
S.~Y\"uksel, ``Stochastic nestedness and the belief sharing information
  pattern,'' {\em {IEEE} Trans. Autom. Control}, pp.~2773--2786, Dec. 2009.

\bibitem{RotkowitzLall:2006}
M.~Rotkowitz and S.~Lall, ``A characterization of convex problems in
  decentralized control,'' {\em {IEEE} Trans. Autom. Control}, vol.~51, no.~2,
  pp.~274--286, 2006.

\bibitem{AsghariNayyar:2015}
S.~M. Asghari and A.~Nayyar, ``Decentralized control problems with
  substitutable actions,'' in {\em IEEE 54th Annual Conference on Decision and
  Control (CDC), 2015}, Dec. 2015.

\bibitem{nayyar2013decentralized}
A.~Nayyar, A.~Mahajan, and D.~Teneketzis, ``Decentralized stochastic control
  with partial history sharing: A common information approach,'' {\em {IEEE}
  Trans. Autom. Control}, vol.~58, no.~7, pp.~1644--1658, 2013.

\end{thebibliography}


\appendix
\begin{proof}[Proof of Lemma \ref{lm:first_structure}]
Consider an arbitrary but fixed strategy $g^R$ of $C^R$. Then the control problem of $C^L$ becomes a MDP with state $\hat H^L_t =\vecc(X_t,H^R_t)$.
From the theory of MDP we know that $C^L$ can use only $\hat H^L_t$ to make the decision at $t$ without loss of optimality. 
\end{proof}


\begin{proof}[Proof of Lemma \ref{lm:beliefupdate}]
At time $t=0$, $h_0^R = z_0$. According to \eqref{eq:thetat}, for any $E \in \R^{n_X}$,
\begin{align*}
&\theta_0 (X_0 \in E) = \prob(X_0 \in E | z_0) =  \prob(X_0 \in E | Z_0 = z_0) =
\notag \\
& \Big \lbrace \begin{array}{ll}
 \prob(X_0 \in E \vert \Gamma_0 =0) = \prob(X_0 \in E) =\pi_{X_0} (E) & \text{if }z_{0}= \emptyset,\\
 \prob(X_0 \in E \vert X_0 = x_0) = \varphi(x_0)(E) & \text{if }z_{0} = x_0
\end{array} 
\end{align*}
which gives \eqref{eq:theta0}.
At time $t+1$, for any $E \in \R^{n_X}$,
if $z_{t+1} = x_{t+1}$, then
\begin{align}
\label{thetat:state:proof}
&\prob^{g^L_{0:t},g^R_{0:t}}(X_{t+1} \in E | h^R_{t+1})
= \prob (X_{t+1} \in E | x_{t+1})
\notag \\
&= 
\prob (X_{t+1} \in E | X_{t+1} = x_{t+1}) = \varphi(x_{t+1}) (E).
\end{align}
If $z_{t+1} = \emptyset$, then
\begin{align}
\label{thetat:null:proof}
&\prob^{g^L_{0:t},g^R_{0:t}}(X_{t+1} \in E | h^R_{t+1})
\notag \\
&
= \prob^{g^L_{0:t},g^R_{0:t}}(X_{t+1} \in E | h^R_{t}, \Gamma_{t+1} =0)
\notag \\
&= \prob^{g^L_{0:t},g^R_{0:t}}(AX_t + B^L U^L_t +  B^R U^R_t + W_t \in E | h^R_{t})
\notag \\
&= \prob(AX_t + B^L (\bar u_t^L + q_t (X_t)) +  B^R u^R_t + W_t \in E | h^R_{t})
\notag\\
& = \int_{\R^{n_X}} \int_{\R^{n_X}} \mathds{1}_{E}(Ax_t + B^L (\bar u_t^L + q_t (x_t)) +  B^R u^R_t + w_t)
\notag \\
&
\theta_t(dx_t) \pi_{W_t}(dw_t)
\end{align}
where the third equality follows from Lemma \ref{lm:strategyspace}.
Furthermore, the last equality of \eqref{thetat:null:proof} is true because $W_t$ is independent of all previous random variables, and distribution of $X_t$ given $h_t^R$ is $\theta_t$.
Note that according to \eqref{eq:thetat}, \eqref{thetat:state:proof}, and \eqref{thetat:null:proof}, $\theta_{t+1}$ is only a function of
$\theta_t, u_t^R, \bar u_t^L, q_t, z_{t+1}$. Hence, we can write it as $\theta_{t+1} = \psi_t(\theta_t,u^R_t,\bar u^L_t,q_t,z_{t+1})$
where $\psi_t(\theta_t,u^R_t,\bar u^L_t,q_t, x_{t+1})$ is given by \eqref{thetat:state:proof} and 
$\psi_t(\theta_t,u^R_t,\bar u^L_t,q_t,\emptyset)$ is given by \eqref{thetat:null:proof}.
\end{proof}

\begin{proof}[Proof of Theorem \ref{thm:structure}]
Suppose the strategies $(g^{L*},g^{R*})\in\mathcal{G}^C$ satisfy \eqref{eq:gR_new}-\eqref{eq:qstart_new}.
We prove by induction that for any $g^L\in\mathcal{G}^L$, $g^R\in\mathcal{G}^R$, $V_t(\prob^{g^L_{0:t-1},g^R_{0:t-1}}(d x_t|h^R_{t})) $ is a measurable function with respect to $h^R_t$, and for any information $h^R_t \in \mathcal{H}^R_t$ we have
\begin{align}
 &\ee^{g^{L}_{0:t-1},g^{R}_{0:t-1},g^{L*}_{t:T},g^{R*}_{t:T}}\left[\sum_{s=t}^Tc_s(X_s,U^L_s,U^R_s)\middle| h^R_t\right]
 \nonumber\\
=& V_t(\prob^{g^L_{0:t-1},g^R_{0:t-1}}(d x_t|h^R_{t})) 
\label{eq:Vinduction_part1}
\\
\leq & \ee^{g^L,g^R}\left[\sum_{s=t}^Tc_s(X_s,U^L_s,U^R_s)\middle| h^R_t\right].
\label{eq:Vinduction_part2}
\end{align}
Note that the above equation at $t=0$ gives the optimality of $g^{L*},g^{R*}$ for Problem \ref{problem1}.

At $T+1$ \eqref{eq:Vinduction_part1} and \eqref{eq:Vinduction_part2} are true (all terms are defined to be $0$ at $T+1$). 
Suppose \eqref{eq:Vinduction_part1} and \eqref{eq:Vinduction_part2} are true at $t+1$.

Consider any $g^L\in\mathcal{G}^L$, $g^R\in\mathcal{G}^R$ and any information $h^R_t \in \mathcal{H}^R_t$ at time $t$. Let $\theta_t(d x_t)=\prob^{g^L_{0:t-1},g^R_{0:t-1}}(d x_t|h^R_{t})$ be the common belief given $h^R_t$ under strategies $g^L_{0:t-1},g^R_{0:t-1}$. 

We first consider \eqref{eq:Vinduction_part1}.
For notational simplicity let $g'= \{g^{L}_{0:t-1},g^{R}_{0:t-1},g^{L*}_{t:T},g^{R*}_{t:T}\}$.
Let $u^{R*}_t,\bar u^{L*}_t,q_t^*$ be the minimizers defined by \eqref{eq:qstart_new} for $\theta_t$.
From the smoothing property of conditional expectation we have
\begin{align}
&\ee^{g'}\left[\sum_{s=t}^Tc_t(X_s,U^L_s,U^R_s)\middle| h^R_t\right]
\nonumber\\
= &\ee^{g'}\left[\ee^{g'}\left[\sum_{s=t+1}^Tc_s(X_s,U^L_s,U^R_s)\middle| H^R_{t+1}\right]\middle| h^R_t\right] \nonumber\\
&+\ee^{g'}\left[c_t(X_t,U^L_t,U^R_t) \middle| h^R_t\right].
\label{eq:Vinduction_part1_zero}
\end{align}
From the induction hypothesis, $V_{t+1}(\prob^{g'}(d x_{t+1}|h^R_{t+1}))$ is measurable with respect to $h^R_{t+1}$, and \eqref{eq:Vinduction_part1} holds at $t+1$. Therefore,
\begin{align}
&\ee^{g'}\left[\sum_{s=t}^Tc_t(X_s,U^L_s,U^R_s)\middle| h^R_t\right]
\nonumber\\
= &\ee^{g'}\left[V_{t+1}(\prob^{g'}(d x_{t+1}|H^R_{t+1}))\middle| h^R_t \vphantom{\ee^{g'}}\right]\nonumber\\
&+\ee^{g'}\left[c_t(X_t,U^L_t,U^R_t) \middle| h^R_t\right]
\nonumber\\
= &\ee^{g'}\left[V_{t+1}(\psi_t(\theta_t,u^{R*}_t,\bar u^{L*}_t,q^*_t,Z_{t+1})) \middle| h^R_t\right]\nonumber\\
	&+\int_{\R^{n_X}} c_t(x_t,\bar u^{L*}_t+q^*_t(x_t),u^{R*}_t)  \theta_t(dx_t).
\label{eq:Vinduction_part1_first}
\end{align}
Note that $X_{t+1}$ is independent of $\Gamma_{t+1}$.
Since $\prob(\Gamma_{t+1}=0) = 1-\prob(\Gamma_{t+1}=1)=p$, the first term in \eqref{eq:Vinduction_part1_first} becomes
\begin{small}
\begin{align}
 &\ee^{g'}\left[V_{t+1}(\psi_t(\theta_t,u^{R*}_t,\bar u^{L*}_t,q^*_t,Z_{t+1})) \middle| h^R_t\right]\nonumber\\
 = &p\ee^{g'}\left[V_{t+1}(\psi_t(\theta_t,u^{R*}_t,\bar u^{L*}_t,q^*_t,Z_{t+1})) \middle| h^R_t,\Gamma_{t+1} = 0\right]\nonumber\\
+ &(1-p)\ee^{g'}\left[V_{t+1}(\psi_t(\theta_t,u^{R*}_t,\bar u^{L*}_t,q^*_t,Z_{t+1})) \middle| h^R_t,\Gamma_{t+1} = 1\right]\nonumber\\
= &pV_{t+1}\left(\alpha_t\right) 
+ (1-p)\ee^{g'}\left[V_{t+1}\left(\varphi(X_{t+1})\right) \middle| h^R_t\right]\nonumber\\
 = &pV_{t+1}\left(\alpha_t\right) 
 + (1-p)\ee^{g'}\left[V_{t+1}\left(\varphi(X_{t+1})\right) \middle| h^R_t,\Gamma_{t+1} = 0\right]\nonumber\\
= &pV_{t+1}\left(\alpha_t\right)
\!+\! (1-p)\int_{\R^{n_X}}\! V_{t+1}(\varphi(x_{t+1})) \alpha_t(d x_{t+1})
\label{eq:Vinduction_part1_second}
\end{align}
\end{small}
where $\alpha_t:=\psi_t(\theta_t,u^{R*}_t,\bar u^{L*}_t,q^*_t,\emptyset)$.
The third equality in \eqref{eq:Vinduction_part1_second} is true because $X_{t+1}$ is independent of $\Gamma_{t+1}$.
The last equality in \eqref{eq:Vinduction_part1_second} follows from Lemma \ref{lm:beliefupdate}.

Combining \eqref{eq:Vinduction_part1_first} and \eqref{eq:Vinduction_part1_second} we get \eqref{eq:Vinduction_part1} from the definition of the value function \eqref{eq:DP_V}.
Moreover, since $g'=\{g^{L}_{0:t-1},g^{R}_{0:t-1},g^{L*}_{t:T},g^{R*}_{t:T}\}$ are all measurable functions, $V_t(\prob^{g^L_{0:t-1},g^R_{0:t-1}}(d x_t|h^R_{t}))$ equals to the conditional expectation $\ee^{g'}\left[\sum_{s=t}^Tc_t(X_s,U^L_s,U^R_s)\middle| h^R_t\right]$ which is measurable with respect to $h^R_t$.

Now let's consider \eqref{eq:Vinduction_part2}.
Let $u^R_t,\bar u^L_t,q_t$ be the variables defined by \eqref{eq:qt} from $h^R_t$ and $g^{L},g^{R}$. 
Following an argument similar to that of \eqref{eq:Vinduction_part1_zero}-\eqref{eq:Vinduction_part1_second}, we get
\begin{align}
&\ee^{g^{L},g^{R}}\left[\sum_{s=t}^Tc_s(X_s,U^L_s,U^R_s)\middle| h^R_t\right]
\nonumber\\
\geq &\int_{\R^{n_X}} c_t(x_t,\bar u^{L}_t+q_t(x_t),u^{R}_t)  \theta_t(dx_t)
\nonumber\\
&+(1-p)  \int_{\R^{n_X}}  V_{t+1} \big(\varphi(x_{t+1})\big)  
\psi_t(\theta_t,u^R_t,\bar u^L_t,q_t, \emptyset)(dx_{t+1})
\nonumber\\
&
+ pV_{t+1}(\psi_t(\theta_t,u^R_t,\bar u^L_t,q_t, \emptyset))
\geq V_t(\theta_t).
\label{eq:Vinduction_part2_proof}
\end{align}
The last inequality in \eqref{eq:Vinduction_part2_proof} follows from the definition of the value function \eqref{eq:DP_V}.
This completes the proof of the induction step, and the proof of the theorem.
\end{proof}

\begin{proof}[Proof of Lemma \ref{lm:quadratic_problems}]
The proof of the first part of Lemma \ref{lm:quadratic_problems} is trivial.

Now let's consider the second part of Lemma \ref{lm:quadratic_problems}, the functional optimization problem \eqref{eq:QP2}. From the property of trace and covariance we have
\begin{align}
&
\tr\left(G\cov\left(\vecc\left(X^{\theta},q(X^{\theta})\right)\right)  \right)
\nonumber\\
=& \ee\left[
QF\left(
G, \vecc\left(X^{\theta},q(X^{\theta})\right)
-\ee\left[\vecc\left(X^{\theta},q(X^{\theta})\right)\right]
\right)
\right]
\nonumber\\
=& \ee\left[
QF\left(
G, \vecc\left(X^{\theta}-\mu(\theta),q(X^{\theta})\right)
\right)
\right]
\label{eq:OP2_obj1}
\end{align}
where the last equation in \eqref{eq:OP2_obj1} holds because
$\ee\left[q(X^{\theta})\right]=0$.
Since $\theta$ is the distribution of $X^{\theta})$, we have
\begin{align}
&\ee\left[
QF\left(
G, \vecc\left(X^{\theta}-\mu(\theta),q(X^{\theta})\right)
\right)
\right]
\nonumber\\
=&\int_{\R^n}
QF\left(
G, \vecc\left(y-\mu(\theta),q(y)\right)
\right)
\theta(dy)
\label{eq:OP2_obj2}
\end{align}
Note that the function inside the integral of \eqref{eq:OP2_obj2} has the quadratic form of the optimization problem \eqref{eq:QP1} with $x = y - \mu(\theta)$ and $u=q(y)$.
From the results of the first part of Lemma \ref{lm:quadratic_problems}, for any $y\in\R^n$ we have
\begin{align*}
&QF\left(
G, \vecc\left(y-\mu(\theta),q(y)\right)
\right)
\nonumber\\
\geq 
&QF\left(
G, \vecc\left(y-\mu(\theta),q^*(y)\right)
\right)
=
QF\left(P,y-\mu(\theta)\right)
\end{align*}
where $q^*$ is the function given by \eqref{eq:QP2_sol}.
It is clear that $q^*_t$ is measurable. Furthermore, 
\begin{align*}
\ee\left[q^*(X^{\theta})\right]
= &\int_{\R^{n}}
-\left(G^{UU}\right)^{-1} G^{UX}
\left(x-\mu(\theta)\right)
\theta(dx)
=0.
\end{align*}
Consequently, $q^*\in\mathcal{Q}^{\theta}$. Then $q^*$ is the optimal solution to problem \eqref{eq:QP2}, and the optimal value is given by
\begin{align}
&\int_{\R^n}
QF\left(
G, \vecc\left(y-\mu(\theta),q^*(y)\right)
\right)
\theta(dy)
\nonumber\\
=&
\int_{\R^n}
QF\left(P,y-\mu(\theta)\right)
\theta(dy)
=\ee\left[
QF\left(P,X^{\theta}-\mu(\theta)\right)
\right]
\nonumber\\
=& \tr\left(P 
\cov\left(\theta\right)
\right).
\label{eq:QP2_inproof}
\end{align}
\end{proof}

\begin{proof}[Proof of Theorem \ref{thm:Sol_packetdrop}]
The proof is done by induction. 

At $T+1$, \eqref{eq:Vt_PacketDrop} is true since $P_{T+1} = \tilde P_{T+1} = \mathbf{0}$.
Suppose \eqref{eq:Vt_PacketDrop} is true at $t+1$ and the matrices are all PSD and $G_{t+1},\tilde G_{t+1}$ are PD.

At time $t$, since $P_{t+1}$ and $\tilde P_t$ are PSD, $H_t$ and $\tilde H_t$ are PSD. 
Since $R_t$ is PD, $G_t = R_t+H_t$ and $\tilde G_t=R_t+(1-p)H_t+p\tilde H_t$ are also PD. Then $P_t$ is PSD because $P_t$ is the Schur complement of $\left[\begin{array}{ll}
 G^{LL}_t & G^{LR}_t \\
 G^{RL}_t & G^{RR}_t
\end{array}\right]$ of the matrix $G_t$. Similarly, $\tilde P_t$ is PSD because $\tilde P_t$ is the Schur complement of
$\tilde G^{LL}_t$ of the matrix $\left[\begin{array}{ll}
\tilde G^{XX}_t & \tilde G^{XL}_t \\
\tilde G^{LX}_t & \tilde G^{LL}_t 
\end{array}\right]$.

Let's now compute the value function at $t$ given by \eqref{eq:DP_V} in Theorem \ref{thm:structure}.
For notational simplicity, let 
$\alpha_t=\psi_t(\theta_t,u^R_t,\bar u^L_t,q_t, \emptyset)$.

We first consider the second term of the value function in \eqref{eq:DP_V}. From the induction hypothesis we have
\begin{align}
 &(1-p)\int_{\R^{n_X}}  V_{t+1} \big(\varphi(x_{t+1})\big)\alpha_t(dx_{t+1}) \nonumber\\
=&(1-p)\int_{\R^{n_X}}  QF\left(P_{t+1},  x_{t+1} \right) \alpha_t(dx_{t+1}) 
+(1-p) e_{t+1} \nonumber\\
=&(1-p)QF\left(P_{t+1},  \mu(\alpha_t)  \right)
\nonumber\\
&+(1-p)\tr\left(P_{t+1}\cov(\alpha_t) \right)
+(1-p) e_{t+1} .
\label{eq:Vterm2}
\end{align}
The last equality in \eqref{eq:Vterm2} follows from the property of covariance.
Similarly, the last term of \eqref{eq:DP_V} becomes
\begin{align}
p\,V_{t+1}\left(\alpha_t\right) 
=&p\, QF\left(P_{t+1},  \mu(\alpha_t)  \right)
\notag \\
&+ p\tr\left(\tilde P_{t+1}\cov(\alpha_t) \right)
 +p\, e_{t+1}.
 \label{eq:Vterm3}
\end{align}
Let $S^{\theta_t}_t := \vecc(X^{\theta_t},\bar u^L_t+q_t(X_t^{\theta_t}),u^R_t)$ where 
$X^{\theta_t}$ is a random vector with distribution $\theta_t$ such that $X^{\theta_t}$ and $W_t$ are independent.
Note that from \eqref{eq:psit} in Lemma \ref{lm:beliefupdate}
\begin{align}
Y^{\theta_t}_t:=
&[A, B^L, B^R]S^{\theta_t}_t +W_t \nonumber\\
= &AX^{\theta_t} + B^L(\bar u^L_t+q_t(X^{\theta_t}))+B^R u^R_t +W_t
\end{align}
is a random vector with distribution $\alpha_t$.
Then, combining \eqref{eq:Vterm2} and \eqref{eq:Vterm3}, the last two terms of the value function becomes
\begin{align}
&QF\left(P_{t+1},  \mu(\alpha_t)  \right) \nonumber\\
&+ \tr\left(((1-p) P_{t+1}+p\tilde P_{t+1}) \cov(\alpha_t) \right)+ e_{t+1} 
\nonumber\\
=&QF\left(P_{t+1},  \ee\left[Y^{\theta_t}_t\right]  \right) \nonumber\\
&+ \tr\left(((1-p) P_{t+1}+p\tilde P_{t+1}) \cov(Y^{\theta_t}_t) \right)+ e_{t+1} 
\nonumber\\
=&QF\left(H_{t},  \ee\left[S^{\theta_t}_t\right]  \right)
+ \tr\left(((1-p) H_{t}+p\tilde H_{t}) \cov(S^{\theta_t}_t) \right)
\nonumber\\
&+ \tr\left(((1-p) P_{t+1}+p\tilde P_{t+1}) \cov(\pi_{W_t})\right)
+ e_{t+1} 
\nonumber\\
=&QF\left(H_{t},  \ee\left[S^{\theta_t}_t\right]  \right)
+ \tr\left(((1-p) H_{t}+p\tilde H_{t}) \cov(S^{\theta_t}_t) \right)
\nonumber\\
&+ e_{t}  .
 \label{eq:Vterms2and3}
\end{align}

Using the random vector $S^{\theta_t}_t$, we can write the first term of the value function as
\begin{align}
&\int_{\R^{n_X}} c_t (x_t, \bar u^L_t + q_t(x_t), u_t^R) \theta_t(dx_t)
=\ee \left[ QF\left(R_t,S^{\theta_t}_t \right)\right]
\nonumber\\
=& QF\left(R_t,\ee \left[S^{\theta_t}_t \right]\right)
+ \tr\left(R_t\cov(S^{\theta_t}_t) \right)
\label{eq:Vterm1}
\end{align}

Now putting \eqref{eq:Vterms2and3} and \eqref{eq:Vterm1} into \eqref{eq:DP_V} we get
\begin{align}
V_t(\theta_t) = &e_t +\min_{q_t \in \mathcal{Q}^{\theta_t}}
\Big\{ \min_{u^R_t,\bar u^L_t} \Big\{ 
\nonumber\\
&QF\left(G_t,\ee \left[S^{\theta_t}_t \right]\right)
+ \tr\left(\tilde G_t\cov(S^{\theta_t}_t) \right)
\Big\}\Big\}.
\label{eq:DP_V_step1}
\end{align}

Note that $\ee [q_t(X^{\theta_t})] =0$ since $q_t \in \mathcal{Q}^{\theta_t}$, and consequently,
$\ee\left[S^{\theta_t}_t\right]= \vecc(\mu(\theta_t),\bar u^L_t,u^R_t)$ depends only on $u^R_t,\bar u^L_t$.
Furthermore, $\cov(S^{\theta_t}_t) = \cov\left(\vecc(X^{\theta_t},q_t(X^{\theta_t}), 0)\right)$ depends only on the choice of $q_t$. Consequently,
the optimization problem in the \eqref{eq:DP_V} can be further simplified to be
\begin{align}
V_t(\theta_t)
=  &e_{t}+
 \min_{u^R_t,\bar u^L_t} 
QF\left(G_{t},\vecc(\mu(\theta_t),\bar u^L_t,u^R_t)\right) 
\nonumber\\
+&\min_{q_t\in\mathcal{Q}^{\theta_t}}
\tr\left(\tilde G_{t}\cov\left(\vecc\left(X^{\theta_t},q_t(X^{\theta_t}),0\right)\right)  \right).
\label{eq:DP_V_step3}
\end{align}

Now we need to solve the two optimization problems
\begin{align}
 &\min_{u^R_t,\bar u^L_t} 
QF\left(G_{t},\vecc(\mu(\theta_t),\bar u^L_t,u^R_t)\right)  ,
\label{eq:DP_V_P1}
\\
&\min_{q_t\in\mathcal{Q}^{\theta_t}}
\tr\left(\tilde G_{t}\cov\left(\vecc\left(X^{\theta_t},q_t(X^{\theta_t}),0\right)\right)  \right).
\label{eq:DP_V_P2}
\end{align}
Since $G_{t}$ is PD, it follows by Lemma \ref{lm:quadratic_problems} that the optimal solution of \eqref{eq:DP_V_P1} is given by \eqref{eq:opt_ubar} and
\begin{align}
\label{eq:DP_V_part1}
\min_{u^R_t,\bar u^L_t} 
QF\left(G_{t},\vecc(\mu(\theta_t),\bar u^L_t,u^R_t)\right) 
=QF\left(P_t, \mu(\theta_t)\right).
\end{align}
Similarly, since $\tilde G_{t}$ is also PD, Lemma \ref{lm:quadratic_problems} implies that the optimal solution of \eqref{eq:DP_V_P2} is given by \eqref{eq:opt_gamma} and
\begin{align}
 &\min_{q_t\in\mathcal{Q}^{\theta_t}}
\tr\left(\tilde G_{t}\cov\left(\vecc\left(X^{\theta_t},q_t(X^{\theta_t}),0\right)\right)  \right)
\nonumber\\
=&
\tr\left(\tilde P_{t} 
\cov \left(\theta_t\right)
\right).
\label{eq:DP_V_part2}
\end{align}
Finally, substituting \eqref{eq:DP_V_part1} and \eqref{eq:DP_V_part2} into \eqref{eq:DP_V_step3} we obtain 
the \eqref{eq:Vt_PacketDrop} at $t$.
This completes the proof of the induction step and the proof of the theorem.
\end{proof}

\begin{proof}[Proof of Theorem \ref{thm:opt_strategies}]
Let $\hat X_t$ be the estimate (conditional expectation) of $X_t$ based on the common information $H^R_t$. Then, for any realization $h^R_t$ of $H^R_t$, $\hat x_t = \mu(\theta_t)$. This together with Theorems \ref{thm:structure} and \ref{thm:Sol_packetdrop} result in \eqref{eq:opt_ULbarUR} and \eqref{eq:opt_UL}.
To show \eqref{eq:estimator_0} and \eqref{eq:estimator_t}, note that 
at time $t=0$, for any realization $h^R_t$ of $H^R_t$,
\begin{align}
&\hat x_0 = \mu(\theta_0) = \int_{\R^{n_X}} y \theta_0(dy) \notag \\
&= 
\Big \lbrace \begin{array}{ll}
\int_{\R^{n_X}} y \pi_{X_0}(dy) = \mu(\pi_{X_0}) & \text{ if }z_{0}= \emptyset,\\
 \int_{\R^{n_X}} y \varphi (x_0)(dy) =  
 x_0 & \text{ if }z_{0} = x_0.
\end{array} 
\end{align}
Therefore, \eqref{eq:estimator_0} is true. Furthermore, at time $t+1$ and for any realization $h^R_t$ of $H^R_t$,
\begin{align*}
\hat x_{t+1} = \mu(\theta_{t+1}) = \int_{\R^{n_X}} y \psi_t(\theta_t,u^R_t,\bar u^L_t,q_t, z_{t+1}) (dy).
\end{align*}
If $z_{t+1} = x_{t+1}$, then $\hat x_{t+1} = \int_{\R^{n_X}} y \varphi(x_{t+1}) (dy) = x_{t+1}.$
\\
If $z_{t+1} = \emptyset$, then,
\begin{align}
&\hat x_{t+1} = \int_{\R^{n_X}}\! y \psi_t(\theta_t,u^R_t,\bar u^L_t,q_t, \emptyset) (dy)
= \int_{\R^{n_X}}\! y\! \int_{\R^{n_X}}\! \int_{\R^{n_X}} 
\notag \\
&
\mathds{1}_{\{y\}}(Ax_t + B^L (\bar u_t^L + q_t (x_t)) +  B^R u^R_t + w_t) 
\notag \\
&
\theta_t(dx_t) \pi_{W_t}(dw_t) dy
\notag \\
& = 
\int_{\R^{n_X}} \int_{\R^{n_X}} (Ax_t + B^L (\bar u_t^L + q_t (x_t)) +  B^R u^R_t + w_t) 
\notag \\
&
\theta_t(dx_t) \pi_{W_t}(dw_t)
= A\hat x_t + B^L \bar u_t^L + B^R u^R_t.
\end{align}
where the third equality is true because 
\begin{align*}
&\int_{R^{n_X}} y \mathds{1}_{\{y\}}(Ax_t + B^L (\bar u_t^L + q_t (x_t)) +  B^R u^R_t + w_t)dy
\notag \\
&= Ax_t + B^L (\bar u_t^L + q_t (x_t)) +  B^R u^R_t + w_t.
\end{align*}
Furthermore, the last equality is true because $q_t \in \mathcal{Q}^{\theta}$ and $W_t$ is a zero mean random vector.
Therefore, \eqref{eq:estimator_t} is true and the proof is complete.
\end{proof}

\end{document}